\newcommand{\C}{\mathbb{C}}
\newcommand{\Z}{\mathbb{Z}}
\newcommand{\Ell}{\mathcal{L}}
\newcommand{\Hvier}{H\textsuperscript{4}}
\newcommand{\Hsechs}{H\textsuperscript{6}}
\newtheorem{theo}{Theorem}[section]
\newtheorem{lemma}[theo]{Lemma}
\theoremstyle{definition}
\theoremstyle{remark}
\newtheorem*{rem}{Remark}
\begin{document}

\title{On the Lagrangian structure of \\ 3D consistent systems of asymmetric quad-equations}

\author{Raphael Boll\and Yuri B. Suris
\footnote{Institut f\"ur Mathematik, MA 7-2, Technische Universit\"at Berlin, Str.~des~17.~Juni~136, 10623 Berlin, Germany; e-mail: boll@math.tu-berlin.de, suris@math.tu-berlin.de}}

\maketitle

\begin{abstract}
Recently, the first-named author gave a classification of 3D consistent 6-tuples of quad-equations with the tetrahedron property; several novel asymmetric 6-tuples have been found. Due to 3D consistency, these 6-tuples can be extended to discrete integrable systems on $\mathbb Z^m$. We establish Lagrangian structures and flip-invariance of the action functional for the class of discrete integrable systems involving equations for which some of the biquadratics are non-degenerate and some are degenerate. This class covers, among others, some of the above mentioned novel systems.
\end{abstract}

\section{Introduction}
One of the definitions of integrability of discrete equations, which becomes increasingly popular in the recent years, is based on the notion of multidimensional consistency. For two-dimensional systems, this notion was clearly formulated first in \cite{NW}, and it was pushed forward as a synonym of integrability in \cite{quadgraphs,Nijhoff}. The outstanding importance of 3D consistency in the theory of discrete integrable systems became evident no later than with the appearance of the well-known ABS-classification of integrable equations in \cite{ABS1}. This classification deals with quad-equations which are set on the faces of an elementary cube in such a manner that all faces carry similar equations differing only by the parameter values assigned to the edges of a cube. Moreover, each single equation admits a $D_4$ symmetry group. Thus, ABS-equations from \cite{ABS1} can be extended in a straightforward manner to the whole of $\Z^{m}$.  In \cite{ABS2}, a relaxed definition of 3D consistency was introduced: the faces of an elementary cube are allowed to carry a priori different quad-equations. The classification performed in \cite{ABS2} is restricted to so-called equations of type~Q, i.e., those equations whose biquadratics are all non-degenerate (a precise definition will be recalled in Section \ref{faces}). In \cite{Atk1} and \cite{Todapaper}, numerous asymmetric systems of quad-equations have been studied, which also include equations of type H (with degenerate biquadratics). A classification of such systems has been given in  \cite{classification}.  This classification covers the majority of systems from \cite{Atk1} and all systems from \cite{Todapaper}, as well as equations from \cite{Hietarinta}, \cite{LY} and \cite{HV}. Moreover, it contains also a number of novel systems. The results of this local classification lead to integrable lattice systems via a procedure of reflecting the cubes in a suitable way (see \cite{classification} for details).
As shown in \cite{Atk1}, \cite{Todapaper},  \cite{classification}, the asymmetric systems still can be seen as families of B\"acklund transformations, and they lead to zero curvature representations of participating quad-equations.

After the pioneering work \cite{MV} the variational (Lagrangian) structure of discrete integrable systems is a topic which receives more and more attention. A Lagrangian formulation of systems from the ABS list is given in \cite{ABS1}. The new idea in \cite{LN} was to extend the action functional of \cite{ABS1} to a multidimensional lattice, which is meaningful due to the invariance of the action under elementary 3D flips of 2D quad-surfaces in $\Z^{m}$. This remarkable and fundamental flip invariance property was proven in \cite{LN} for some particular cases from the ABS list and in \cite{BS1} for the whole list.

In the present paper, we introduce a Lagrangian formulation for an important subclass of systems from the classification in \cite{classification}, namely those which include type H equations not all of whose biquadratics are degenerate, and prove the flip invariance of action functionals for these systems.

The outline of the paper is as follows: In Sections~\ref{faces} and \ref{cubes} we introduce the necessary notions and notation and recapitulate the classification results from \cite{classification} which are relevant for the present paper. Then, in Section~\ref{threelegforms} we use the so-called three-leg form of quad-equations to construct the Lagrangians for our equations. The fundamental property of Lagrangians on an elementary quadrilateral constitutes one of the main results of the paper (Theorem \ref{th:single}). The Lagrangians are elementary building blocks of the action functional constructed in Section~\ref{flipQuad}, which also contains the second main result of the paper, namely  the flip invariance of the action functional for systems of our class.

\section{Quad-equations on a single square} \label{faces}

We start with introducing relevant objects and notations. A \emph{quad-equation} is a relation of the type $Q\left(x_{1},x_{2},x_{3},x_{4}\right)=0$, where $Q\in\C\left[x_{1},x_{2},x_{3},x_{4}\right]$ is an irreducible multi-affine polynomial. It is convenient to visualize a quad-equation by an elementary square whose vertices carry the four fields $x_1,x_2,x_3,x_4$, cf. Figure~\ref{fig:2}. For quad-equations without pre-supposed symmetries, the natural classification problem is posed modulo the action of four independent M\"obius transformations of all four variables:
\[
Q(x_1,x_2,x_3,x_4)\; \rightsquigarrow\;
\prod_{k=1}^4(c_kx_k+d_k)\cdot Q\Big(\frac{a_1x_1+b_1}{c_1x_1+d_1},\frac{a_2x_2+b_2}{c_2x_2+d_2},
\frac{a_3x_3+b_3}{c_3x_3+d_3},\frac{a_4x_4+b_4}{c_4x_4+d_4}\Big).
\]
In the simplest situation, a quad-equation is thought of as an elementary building block of a discrete system $Q(x_{m,n},x_{m+1,n},x_{m+1,n+1},x_{m,n+1})=0$ for a function $x:\Z^2\to\C$, but we will also encounter more tricky ways of composing discrete systems on $\Z^2$ from different quad-equations.

A very useful tool for studying and classifying quad-equations, or, better, multi-affine polynomials of four variables, are their \emph{biquadratics}. For any two (different) indices $i,j\in\{1,2,3,4\}$, we define the corresponding biquadratic as follows:
\[
Q^{i,j}=Q^{i,j}\left(x_{i},x_{j}\right)=Q_{x_{k}}Q_{x_{\ell}}-QQ_{x_{k},x_{\ell}},
\]
where $\{k,\ell\}$ is the complement of $\{i,j\}$ in $\{1,2,3,4\}$. A biquadratic polynomial $h\left(x,y\right)$ is called \emph{non-degenerate} if no polynomial in its equivalence class with respect to M\"obius transformations in $x$ and $y$ is divisible by a factor $x-\alpha_{1}$ or $y-\alpha_{2}$ (with $\alpha_{i}\in\C$). Otherwise, $h$ is called \emph{degenerate}. Here, of course, the action of two M\"obius transformations on $h$ is defined as
\[
h(x,y)\; \rightsquigarrow\;
(c_1x+d_1)^2(c_2y+d_2)^2\cdot h\Big(\frac{a_1x+b_1}{c_1x+d_1},\frac{a_2y+b_2}{c_2y+d_2}\Big).
\]
We recall also that an important role in studying the multi-affine polynomials $Q$ belongs to the four discriminants
\[
r^i(x_i)=(Q^{i,j}_{x_j})^2-2Q^{i,j}_{x_jx_j}Q^{i,j}
\]
(it can be shown that the result does not depend on $j$). These are quartic polynomials which can be assigned to the vertices of an elementary square. Note that the action of a M\"obius transformation on a quartic polynomial $r(x)$ is defined as
\[
r(x)\;\rightsquigarrow\;(cx+d)^4\cdot r\Big(\frac{ax+b}{cx+d}\Big).
\]

A multi-affine polynomial $Q$ is said to be of \emph{type~Q} if all its biquadratics are non-degenerate and of \emph{type~H} otherwise. It was proven in \cite{classification} that in the latter case either four out of six biquadratics are degenerate, and then $Q$ is said to be of \emph{type~\Hvier}, or all six biquadratics are degenerate, and then $Q$ is said to be of \emph{type~\Hsechs}.

A complete classification of multi-affine polynomials $Q$ up to M\"obius transformations in $x_{1},\ldots,x_{4}$ is given in \cite{classification}. In the present paper, we will be only interested in 3D consistent six-tuples of quad-equations containing at least one equation of type~\Hvier\ but no equations of type~\Hsechs. (Recall that six-tuples consisting solely of type~Q equations were already classified in \cite{ABS2}, and the flip invariance of the action functional for such systems was completely discussed in \cite{BS1}.) As shown in \cite{classification}, the list of canonical normal forms of polynomials of  type~\Hvier\ consists of three items (which actually appeared already in \cite{ABS2}):
\begin{itemize}
\item[$H_{1}^{\epsilon}$:] $Q=\left(x_{1}-x_{3}\right)\left(x_{2}-x_{4}\right)+
    \left(\alpha_{2}-\alpha_{1}\right)\left(1-\epsilon^{2} x_{2}x_{4}\right)$,\par
\item[$H_{2}^{\epsilon}$:] $Q=\left(x_{1}-x_{3}\right)\left(x_{2}-x_{4}\right)+
    \left(\alpha_{2}-\alpha_{1}\right)\left(x_{1}+x_{2}+x_{3}+x_{4}\right)+
    \alpha_{2}^{2}-\alpha_{1}^{2}$\par
$\qquad-\dfrac{\epsilon}{2}\left(\alpha_{2}-\alpha_{1}\right)\left(2 x_{2}+\alpha_{1}+\alpha_{2}\right)\left(2 x_{4}+\alpha_{1}+\alpha_{2}\right)-\dfrac{\epsilon}{2}\left(\alpha_{2}-\alpha_{1}\right)^{3}$,\par
\item[$H_{3}^{\epsilon}$:]
    $Q=e^{2\alpha_{1}}\left(x_{1}x_{2}+x_{3}x_{4}\right)-
    e^{2\alpha_{2}}\left(x_{1}x_{4}+x_{2}x_{3}\right)+
    \left(e^{4\alpha_{1}}-e^{4\alpha_{2}}\right)\left(\delta^{2}+
    \dfrac{\epsilon^{2} x_{2}x_{4}}{e^{2\left(\alpha_{1}+\alpha_{2}\right)}}\right)$.
\end{itemize}
Thus, the members of this list, like those of the list Q, depend on two parameters $\alpha_1,\alpha_2$ which can be assigned to the pairs of opposite edges of the elementary square. However, unlike the Q case, where all vertices are on the equal footing (which can be visualized as on Figure \ref{fig:3}), for any type \Hvier\ polynomial, the four discriminants assigned to the vertices belong to two different types (at least for $\epsilon\neq 0$). Up to constant factors $\kappa^2(\alpha_1,\alpha_2)$ depending on the parameters only and listed in Section \ref{TypeH} (cf. also Lemma \ref{lem:1param}), one has:
\begin{itemize}
\item[$H_{1}^{\epsilon}$:]
    \quad $r^1(x_1)=\epsilon^2$,\; $r^2(x_2)=0$,\;  $r^3(x_3)=\epsilon^2$,\; $r^4(x_4)=0$   ; \par
\item[$H_{2}^{\epsilon}$:]
   \quad $r^1(x_1)=1+4\epsilon x_1$,\; $r^2(x_2)=1$,\; $r^3(x_3)=1+4\epsilon x_3$,\; $r^4(x_4)=1$; \par
\item[$H_{3}^{\epsilon}$:]
   \quad  $r^1(x_1)=x_1^2-4\delta^2\epsilon^2$,\; $r^2(x_2)=x_2^2$,\; $r^3(x_3)=x_3^2-4\delta^2\epsilon^2$,\; $r^4(x_4)=x_2^2$.
\end{itemize}
This leads to the coloring of the vertices in two black and two white ones. We will think of the black vertices as carrying the parameter $\epsilon$, while for the white vertices $\epsilon=0$.


It will be useful to visually keep track of the distribution of the four degenerate and two non-degenerate biquadratics of a type \Hvier\ polynomial. We will indicate the two non-degenerate ones by thick lines. When composing discrete systems from quad-equations, it becomes crucial how to arrange the variables (fields) $x_{1}$, $x_{2}$, $x_{3}$ and $x_{4}$ over the vertices of an elementary square. There are two different (up to rotations of an elementary square) quad-equations of type~\Hvier, corresponding to two different \emph{biquadratics patterns}, as shown on Figure~\ref{fig:2}.

\begin{figure}[htbp]
   \centering
   \subfloat[Quad-equation of type Q]
    {\label{fig:3}\includegraphics[scale=1.2]{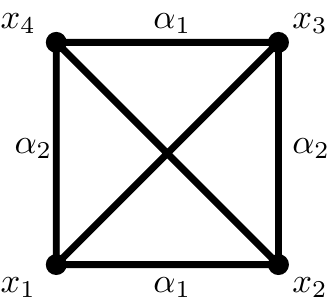}}\qquad
   \subfloat[Rhombic version of a quad-equation of type \Hvier]
     {\label{fig:2a}\includegraphics[scale=1.2]{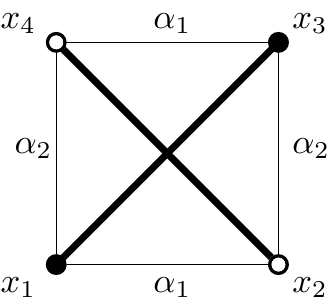}}\qquad
   \subfloat[Trapezoidal version of a quad-equation of type \Hvier]
     {\label{fig:2b}\includegraphics[scale=1.2]{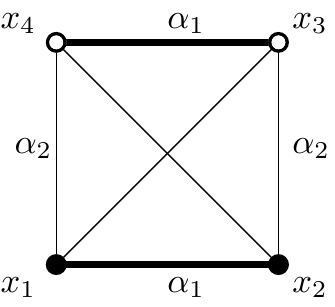}}
   \caption{Biquadratics patterns of quad-equations; non-degenerate biquadratics are indicated by thick lines}
\label{fig:2}
\end{figure}

The first arrangement corresponds to assigning the two black vertices to one diagonal of the elementary square and the two white vertices to another diagonal, see Figure \ref{fig:2a}. We will call these quad-equations the {\em rhombic version of $H_k^\epsilon$}, due to the symmetry properties
\[
Q\left(x_{1},x_{2},x_{3},x_{4};\alpha_{1},\alpha_{2},\epsilon\right)=
-Q\left(x_{3},x_{2},x_{1},x_{4};\alpha_{2},\alpha_{1},\epsilon\right)=
-Q\left(x_{1},x_{4},x_{3},x_{2};\alpha_{2},\alpha_{1},\epsilon\right)
\]
of the corresponding polynomials.

The second version of a quad-equation is obtained from the canonical normal form $H_k^\epsilon$ by
the switching $x_2\leftrightarrow x_3$ and $\alpha_1\leftrightarrow \alpha_2-\alpha_1$. As a consequence, the two black vertices share an edge rather than a diagonal, and the same holds true for the two white vertices. This version of a quad-equation possesses only one axis symmetry which interchanges the vertices of the same color (see Figure \ref{fig:2b}). We will call the corresponding quad-equations the {\em trapezoidal version of $H_k^\epsilon$}.

By the way, the biquadratics pattern provides us with an unambiguous way of bi-coloring the vertices of the square even in the case $\epsilon=0$, when all four discriminants are the same (this corresponds to the list H from \cite{ABS1}): each of the two thick lines corresponding to the two non-degenerate biquadratics connects two vertices of the same color.

Concerning discrete systems composed of quad-equations, we mention that the rhombic version of any equation $H_k^\epsilon$ can be imposed on any bipartite quad-graph (cf. also \cite{XP}). As for the trapezoidal version of $H_k^\epsilon$, it can be imposed on the regular square lattice $\Z^2$ with horizontal (or vertical) rows colored alternately black and white, see Figure \ref{fig:6}. Integrability of such a non-autonomous system of quad-equations understood as 3D consistency has been discussed in \cite{classification}.

\begin{figure}[htbp]
   \centering
   \includegraphics[scale=1]{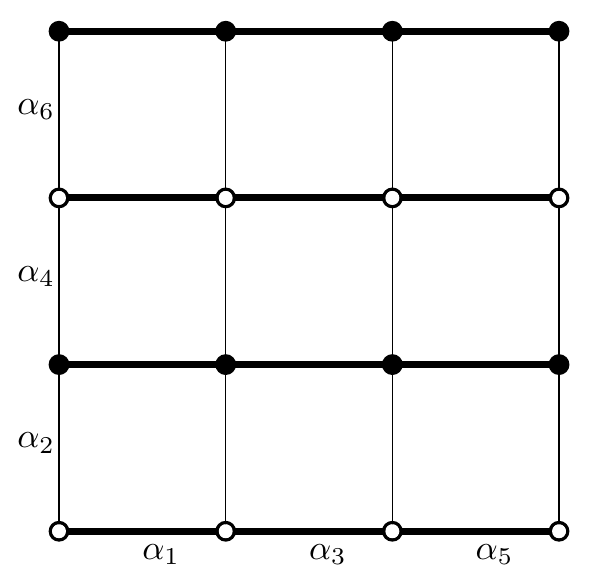}
   \caption{Integrable system on $\mathbb Z^2$ consisting of quad-equations $H_k^\epsilon$}
\label{fig:6}
\end{figure}

\section{3D consistent six-tuples of quad-equations} \label{cubes}

We will now consider six-tuples of (a priori different) quad-equations assigned to the faces of a 3D cube:
\begin{eqnarray} \label{system}
A\left(x,x_{1},x_{2},x_{12}\right)=0,& \quad &
\bar{A}\left(x_{3},x_{13},x_{23},x_{123}\right)=0,\nonumber\\
B\left(x,x_{2},x_{3},x_{23}\right)=0,& \quad &
\bar{B}\left(x_{1},x_{12},x_{13},x_{123}\right)=0,\\
C\left(x,x_{1},x_{3},x_{13}\right)=0,& \quad &
\bar{C}\left(x_{2},x_{12},x_{23},x_{123}\right)=0,\nonumber
\end{eqnarray}
see Figure~\ref{fig:cube}.
Such a six-tuple is \emph{3D consistent} if, for arbitrary initial data $x$, $x_{1}$, $x_{2}$ and $x_{3}$,
the three values for $x_{123}$ (calculated by using $\bar{A}=0$, $\bar{B}=0$ or $\bar{C}=0$) coincide. A 3D consistent six-tuple is said to possess the \emph{tetrahedron property} if there exist two polynomials $K$ and $\bar{K}$ such that the equations
\[
K\left(x,x_{12},x_{13},x_{23}\right)=0,\qquad \bar{K}\left(x_{1},x_{2},x_{3},x_{123}\right)=0
\]
are satisfied for every solution of the six-tuple. It can be shown that the polynomials $K$ and $\bar{K}$ are multi-affine and irreducible (see \cite{classification}).
\begin{figure}[htbp]
   \centering
   \subfloat[A 3D consistent six-tuple]{\label{fig:cube}\includegraphics{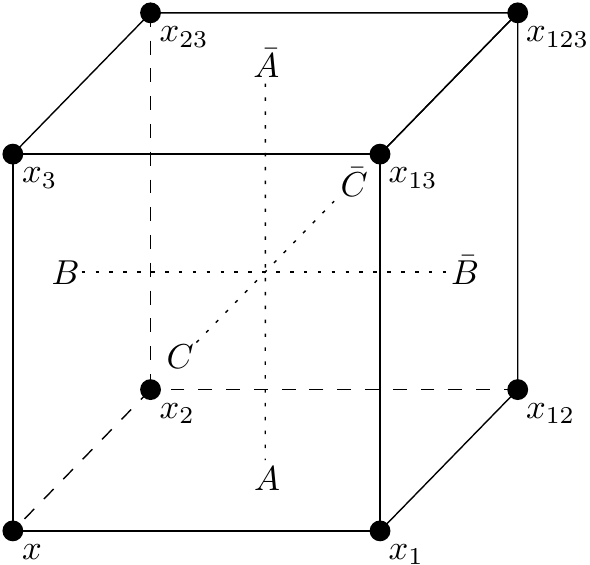}}\qquad
   \subfloat[Making tetrahedra to faces]{\label{fig:cube2}\includegraphics{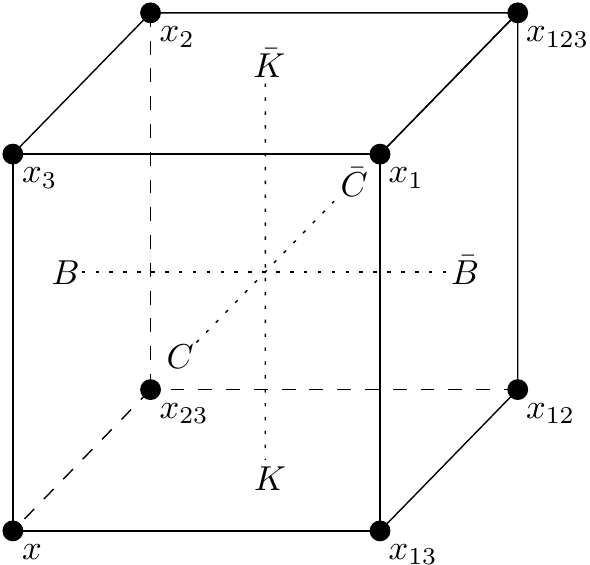}}
   \caption{Equations on a cube}
\end{figure}

A complete classification of 3D consistent six-tuples~\eqref{system} possessing the tetrahedron property, modulo (possibly different) M\"obius transformations of the eight fields $x$, $x_{i}$, $x_{ij}$, $x_{123}$, is given in \cite{classification}. We mention one of the important general properties of such six-tuples:
\begin{lemma}\label{lem:opposite}
Equations on opposite faces are of the same type and have the same biquadratics pattern.
\end{lemma}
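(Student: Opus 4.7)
The plan is to reduce the statement to three ingredients: (i) an intrinsic character of edge biquadratics under 3D consistency; (ii) an intrinsic character of diagonal biquadratics via the tetrahedron property; and (iii) an elementary symmetry of the biquadratics pattern on every admissible single-face type, namely that any two complementary pairs of vertices carry biquadratics of the same degeneracy status.

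First, for ingredient (i), if the edge $(x,x_1)$ is shared by faces $A$ and $C$, then 3D consistency forces $A^{x,x_1}$ and $C^{x,x_1}$ to coincide up to a nonzero constant: both describe the locus of pairs $(x,x_1)$ admissible as initial data on the cube, which 3D consistency renders independent of the direction of propagation. For ingredient (ii), the tetrahedron property yields that the diagonal biquadratic $A^{x,x_{12}}$ coincides, again up to a constant, with the biquadratic $K^{x,x_{12}}$ of the tetrahedron polynomial $K$, and analogously $\bar{A}^{x_{13},x_{23}}$ with $K^{x_{13},x_{23}}$; the symmetric statement holds for the other diagonal pair via $\bar{K}$. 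Ingredient (iii) is verified from the canonical list of single-face patterns: for type Q all six biquadratics are non-degenerate, for type \Hsechs\ all six are degenerate, and for the rhombic and trapezoidal versions of \Hvier\ the two non-degenerate biquadratics form one complementary pair (the two diagonals for rhombic, or a pair of opposite edges for trapezoidal), while each of the two remaining complementary pairs consists of two degenerate biquadratics.

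Now I apply (iii) to each of the four side faces and to the two tetrahedra. On side face $C$, the edge $(x,x_1)$ of $A$ and the edge $(x_3,x_{13})$ of $\bar{A}$ form a complementary pair of $C$; by (i) and (iii), the biquadratic degeneracies of these two edges agree. The same argument on $\bar{C}$, $B$, $\bar{B}$ matches the remaining three edges of $A$ with the corresponding edges of $\bar{A}$. Next, on the tetrahedron $K$ with vertex set $\{x,x_{12},x_{13},x_{23}\}$, the pair $(x,x_{12})$ (a diagonal of $A$) is complementary to the pair $(x_{13},x_{23})$ (a diagonal of $\bar{A}$); by (ii) and (iii), their degeneracies agree. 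The same reasoning applied to $\bar{K}$ matches the remaining diagonal $(x_1,x_2)$ of $A$ with the remaining diagonal $(x_3,x_{123})$ of $\bar{A}$. Consequently, all six biquadratic degeneracy statuses on $\bar{A}$ coincide with those on $A$, which forces $\bar{A}$ to be of the same type and to carry the same biquadratics pattern as $A$.

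The main obstacle is establishing ingredients (i) and (ii), namely the intrinsic nature of edge and diagonal biquadratics for the class of 3D consistent six-tuples with the tetrahedron property. For edges this follows from 3D consistency by an elimination argument; for diagonals it rests on the tetrahedron equations together with the irreducibility of $K$ and $\bar{K}$ as multi-affine polynomials. Once these technical points are settled, the remainder of the argument is essentially bookkeeping: matching the four edges of $A$ with those of $\bar{A}$ through the four side faces, and matching the two diagonals of $A$ with those of $\bar{A}$ through the two tetrahedra.
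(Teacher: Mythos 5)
The paper does not actually prove this lemma---it is quoted from \cite{classification}---so there is no in-text argument to compare against; judged on its own terms, your strategy is sound and is essentially the standard one: match the four edges of $A$ with those of $\bar A$ through the side faces, match the two diagonals through the tetrahedra, and use the fact that for every irreducible multi-affine polynomial in the classification the degenerate/non-degenerate statuses are constant on complementary pairs of vertices (all six non-degenerate for type Q, all six degenerate for type \Hsechs, and for type \Hvier\ the two non-degenerate biquadratics always form a complementary pair). Your bookkeeping is correct, and the reduction of ``same type and same pattern'' to the six degeneracy statuses is legitimate.

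The only substantive reservation concerns your ingredients (i) and (ii), which you rightly identify as the real content. Ingredient (i) is exactly part 1 of Lemma \ref{biquads} of this paper (also imported from \cite{classification}), so it is available to cite, but your justification of it (``both describe the locus of admissible initial data'') is a heuristic, not a proof; the actual argument is an elimination/normalization computation. Ingredient (ii) is the more delicate point: the proportionality $A^{x,x_{12}}\doteq K^{x,x_{12}}$ does \emph{not} reduce to Lemma \ref{biquads} applied to the dual six-tuple of Lemma \ref{tetrahedronUse}, because in the dual cube the pair $(x,x_{12})$ is again a \emph{diagonal} (of the face carrying $K$), not an edge, so the edge statement never applies to it. This proportionality is true and is established in \cite{classification}, but as written your proof leaves its justification at the level of an assertion (``rests on the tetrahedron equations together with irreducibility''). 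Note also that ingredient (ii) is genuinely needed only to separate the rhombic \Hvier\ pattern from \Hsechs; for type Q and for the trapezoidal pattern the four edge matchings already suffice, which you could exploit to shorten the argument in the cases relevant to this paper.
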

In the present article, we consider only six-tuples~\eqref{system} whose equations are not all of type~Q and which do not contain type~\Hsechs\ equations. Taking into account Lemma \ref{lem:opposite}, one easily sees that three different combinatorial arrangements of biquadratics patterns are possible, as indicated on Figure~\ref{fig:5}.
\begin{figure}[htbp]
   \centering
   \subfloat[First case: all face equations of type \Hvier, tetrahedron equations of type Q]{\label{fig:5a}\includegraphics[scale=1.2]{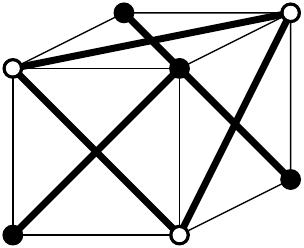}}\qquad
   \subfloat[Second case: two pairs of face equations and tetrahedron equations of type \Hvier, one pair of face equations of type Q]{\label{fig:5b}\includegraphics[scale=1.2]{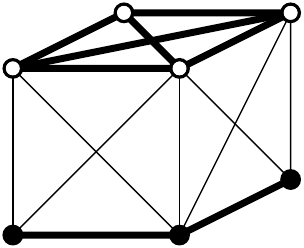}}\qquad
   \subfloat[Third case: all face and tetrahedron equations of type \Hvier]{\label{fig:5c}\includegraphics[scale=1.2]{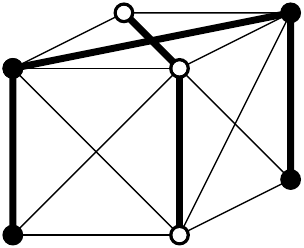}}
   \caption{Biquadratics patterns}
   \label{fig:5}
\end{figure}
Arrangements on Figures~\ref{fig:5a} and \ref{fig:5b} are related to each other by the following construction, see \cite{classification} for a proof:
\begin{lemma} \label{tetrahedronUse}
Consider a 3D consistent six-tuple~\eqref{system} possessing the tetrahedron property expressed by the two equations
\[
K\left(x,x_{12},x_{13},x_{23}\right)=0,\qquad
\bar{K}\left(x_{1},x_{2},x_{3},x_{123}\right)=0.
\]
Then the six-tuple
\begin{eqnarray}\label{dualsystem}
K\left(x,x_{12},x_{13},x_{23}\right)=0,& \quad &
\bar{K}\left(x_{1},x_{2},x_{3},x_{123}\right)=0,\nonumber\\
B\left(x,x_{2},x_{3},x_{23}\right)=0,& \quad &
\bar{B}\left(x_{1},x_{12},x_{13},x_{123}\right)=0,\\
C\left(x,x_{1},x_{3},x_{13}\right)=0,& \quad &
\bar{C}\left(x_{2},x_{12},x_{23},x_{123}\right)=0,\nonumber
\end{eqnarray}
assigned to the faces of a 3D cube as on Figure~\ref{fig:cube2}, is 3D consistent and possesses the tetrahedron property expressed by the two equations
\[
A\left(x,x_{1},x_{2},x_{12}\right)=0,\qquad
\bar{A}\left(x_{3},x_{13},x_{23},x_{123}\right)=0.
\]
3D consistency of \eqref{dualsystem} is understood as the property of the initial value problem with the initial date $x$, $x_{3}$, $x_{13}$ and $x_{23}$.
\end{lemma}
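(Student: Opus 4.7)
My plan is to exhibit a generic bijection between solutions of the original cube system \eqref{system} and solutions of the dual system \eqref{dualsystem}, parametrized respectively by the initial data $(x,x_1,x_2,x_3)$ and $(x,x_3,x_{13},x_{23})$, and to deduce both the 3D consistency and the tetrahedron property of \eqref{dualsystem} as immediate consequences. Since $B$ and $C$ are irreducible multi-affine polynomials, each is generically uniquely solvable for any one of its four arguments. The assignment $(x,x_1,x_2,x_3)\mapsto(x,x_3,x_{13},x_{23})$, where $x_{13}$ is determined by $C(x,x_1,x_3,x_{13})=0$ and $x_{23}$ by $B(x,x_2,x_3,x_{23})=0$, is therefore a generic bijection whose inverse recovers $x_1$ and $x_2$ from the same two equations.

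Every solution of \eqref{system} satisfies $B,\bar B,C,\bar C$ by construction and $K=0$, $\bar K=0$ by the tetrahedron property, so it is automatically a solution of \eqref{dualsystem}; pulled back along the bijection above, this yields a four-parameter family of solutions of \eqref{dualsystem} labelled by $(x,x_3,x_{13},x_{23})$. To obtain the claimed 3D consistency directly, I would argue that from $(x,x_3,x_{13})$ equation $C=0$ yields $x_1$, from $(x,x_3,x_{23})$ equation $B=0$ yields $x_2$, and from $(x,x_{13},x_{23})$ equation $K=0$ yields $x_{12}$; this fixes all seven values other than $x_{123}$. The remaining vertex can then be computed from $\bar B$, $\bar C$, or $\bar K$, and the three values coincide because, in the original cube with the corresponding initial data $(x,x_1,x_2,x_3)$, the consistency of \eqref{system} together with the tetrahedron property guarantees exactly this: $\bar A,\bar B,\bar C$ give one and the same $x_{123}$, and $\bar K=0$ is automatically satisfied at that value.

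The tetrahedron property of \eqref{dualsystem} with tetrahedron equations $A$ and $\bar A$ then follows at once: every solution of \eqref{dualsystem} coincides, via the bijection, with the corresponding solution of \eqref{system}, on which $A(x,x_1,x_2,x_{12})=0$ and $\bar A(x_3,x_{13},x_{23},x_{123})=0$ hold by definition. The main work in the argument is the careful bookkeeping of the bijection between the two initial-data sets and the verification that the three determinations of $x_{123}$ via $\bar B,\bar C,\bar K$ do indeed agree; both rest on the multi-affine irreducibility of the polynomials involved together with the original 3D consistency and tetrahedron property, provided one restricts to an open set where the relevant multi-affine solvings are non-degenerate.
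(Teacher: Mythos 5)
Your argument is correct and is exactly the construction one expects here: the paper itself gives no proof of this lemma but defers to \cite{classification}, and your transfer of the initial-value problem $(x,x_{3},x_{13},x_{23})$ back to $(x,x_{1},x_{2},x_{3})$ via the generically invertible multi-affine solves of $B=0$ and $C=0$, combined with the observation that $K$ and $\bar{K}$ (being affine in $x_{12}$, resp.\ $x_{123}$, and vanishing on the original solution by the tetrahedron property) single out the same values of $x_{12}$ and $x_{123}$ as $A$ and $\bar{A},\bar{B},\bar{C}$ do, yields both the 3D consistency and the tetrahedron property of the dual six-tuple. The only point worth making fully explicit is that the resulting identities are first obtained on a Zariski-dense set of initial data and then extended by polynomiality, which your final remark about restricting to an open set of non-degenerate solves already indicates.
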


Now we can formulate the main result of \cite{classification} concerning 3D consistent six-tuples containing equations of type \Hvier\  but no equations of type \Hsechs. In this theorem, we use normal forms $Q_k^\epsilon$ of equations of the list Q which are slightly different from those adopted in \cite{ABS1, ABS2}, namely
\begin{itemize}
\item[$Q_{1}^{\epsilon}$:] $Q=\alpha_{1}\left(x_{1}x_{2}+x_{3}x_{4}\right)-\alpha_{2}\left(x_{1}x_{4}+x_{2}x_{3}\right)-
    \left(\alpha_{1}-\alpha_{2}\right)\left(x_{1}x_{3}+x_{2}x_{4}\right)$\par$\qquad+
    \epsilon^{2}\alpha_{1}\alpha_{2}\left(\alpha_{1}-\alpha_{2}\right)$,\par
\item[$Q_{2}^{\epsilon}$:] $Q=\alpha_{1}\left(x_{1}x_{2}+x_{3}x_{4}\right)-\alpha_{2}\left(x_{1}x_{4}+x_{2}x_{3}\right)-
    \left(\alpha_{1}-\alpha_{2}\right)\left(x_{1}x_{3}+x_{2}x_{4}\right)+
    \alpha_{1}\alpha_{2}\left(\alpha_{1}-\alpha_{2}\right)$\par
    $\qquad+
    \epsilon\alpha_{1}\alpha_{2}\left(\alpha_{1}-\alpha_{2}\right)\left(x_{1}+x_{2}+x_{3}+x_{4}\right)
    -\epsilon^{2}\alpha_{1}\alpha_{2}\left(\alpha_{1}-\alpha_{2}\right)
    \left(\alpha_{1}^{2}-\alpha_{1}\alpha_{2}+\alpha_{2}^{2}\right)$,\par
\item[$Q_{3}^{\epsilon}$:] $Q=\sinh\left(2\alpha_{1}\right)\left(x_{1}x_{2}+x_{3}x_{4}\right)-
    \sinh\left(2\alpha_{2}\right)\left(x_{1}x_{4}+x_{2}x_{3}\right)$\par $\qquad-\sinh\left(2\left(\alpha_{1}-\alpha_{2}\right)\right)\left(x_{1}x_{3}+x_{2}x_{4}\right)
    -4\delta^{2}\epsilon^{2}\sinh\left(2\alpha_{1}\right)\sinh\left(2\alpha_{2}\right)
    \sinh\left(2\left(\alpha_{1}-\alpha_{2}\right)\right)$.
\end{itemize}
\begin{theo}\label{th:class}
Modulo independent M\"obius transformations of all fields, there are nine 3D consistent six-tuples of quad-equations with the tetrahedron property which contain equations of type \Hvier\  but no equations of type \Hsechs:
\begin{itemize}
\item[a)] Three systems with the biquadratics pattern as on Figure \ref{fig:5a}, with all six face equations being rhombic $H_k^\epsilon$, and the tetrahedron equations $Q_k^\epsilon$ (for $k=1,2,3$).
\item[b)] Three systems with the biquadratics pattern as on Figure \ref{fig:5b}, with two pairs of face equations being trapezoidal $H_k^\epsilon$, one pair of face equations $Q_k^\epsilon$, and the tetrahedron equations $H_k^\epsilon$ (for $k=1,2,3$).
\item[c)] Three systems with the biquadratics pattern as on Figure \ref{fig:5c}, with two pairs of face equations being trapezoidal $H_k^\epsilon$, the third pair of face equations being rhombic $H_k^\epsilon$, and both tetrahedron equations $H_k^\epsilon$ (for $k=1,2,3$).
\end{itemize}
\end{theo}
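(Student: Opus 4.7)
The plan is to combine a purely combinatorial reduction of biquadratics patterns on the cube with the local classification of polynomials of types \Hvier{} and~Q, and then to close the argument by verifying 3D consistency as a system of polynomial identities. First I would invoke Lemma \ref{lem:opposite} to enumerate all admissible arrangements of biquadratics patterns on the six faces of the cube. Since opposite faces must agree in both type and pattern, and no face is of type \Hsechs, the possible gluings of the thick-line biquadratics patterns along the twelve edges are sharply constrained: a short case analysis yields exactly the three configurations depicted in Figure \ref{fig:5}. Each of these already determines a fixed bi-coloring of the eight vertices of the cube and therefore the position of the distinguished vertices carrying the parameter $\epsilon$.

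Next I would exploit the duality furnished by Lemma \ref{tetrahedronUse}: every six-tuple of case~(a) yields, by replacing the pair $A,\bar A$ with the tetrahedron equations $K,\bar K$, a six-tuple of case~(b), and conversely. This symmetry essentially halves the work and already pairs the three systems of part~(a) with the three systems of part~(b) via an interchange of roles between $Q_k^\epsilon$ and $H_k^\epsilon$. Case~(c) has no such reduction and must be treated independently.

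Within each remaining case I would proceed as follows. Using the Möbius freedom at each vertex, normalize one face equation to one of the three canonical forms $H_1^\epsilon$, $H_2^\epsilon$, $H_3^\epsilon$; this fixes the bi-coloring at four vertices and restricts the residual Möbius freedom at the other four to transformations that preserve color and the distinguished discriminant at each black vertex. For the remaining five faces, set up multi-affine ansätze of the correct biquadratics pattern, and impose three compatibility requirements: (i)~equality of biquadratics along each edge shared by two faces, (ii)~equality of the vertex discriminants $r^i(x_i)$ in the normalized forms tabulated in Section \ref{faces}, and (iii)~full 3D consistency of the cube, i.e.~coincidence of the three rational expressions for $x_{123}$ obtained from $\bar A=0$, $\bar B=0$ and $\bar C=0$. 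Conditions (i)--(ii) are local matching conditions that can be handled edge-by-edge and vertex-by-vertex and are already strong enough to force the sub-index~$k$ to be uniform across the cube and the parameters $\alpha_i$ to be assigned consistently to parallel edges.

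The hard part will be step~(iii). After (i)--(ii) only finitely many parameter-dependent candidates remain, but verifying that the three candidate expressions for $x_{123}$ truly coincide as rational functions of the four free initial data $(x,x_1,x_2,x_3)$ is a rigid polynomial identity that has to be checked for every case and every $k\in\{1,2,3\}$ separately. I expect case~(c) to be the most delicate, since there the bi-coloring forces both hidden tetrahedron equations to be of type \Hvier{} with a specific rhombic/trapezoidal mixture, and confirming that a genuine $H_k^\epsilon$ polynomial arises on each tetrahedron requires the explicit biquadratic and discriminant identities collected in Section \ref{TypeH}. This computational verification, unavoidable in symbolic form, is where the rigidity of the classification --- exactly nine systems, governed by a single $\epsilon$ per entry --- finally emerges.
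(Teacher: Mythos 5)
This theorem is not proven in the present paper at all: it is quoted verbatim as ``the main result of \cite{classification}'', so there is no in-paper proof to compare your attempt against. What the surrounding text does reveal of the actual argument (Lemma \ref{lem:opposite} reducing the combinatorics to the three patterns of Figure \ref{fig:5}, Lemma \ref{tetrahedronUse} relating the first two patterns, and Lemma \ref{biquads} on the matching of edge biquadratics) is consistent with the skeleton of your plan, so your overall strategy is the right one. But as a proof your proposal has concrete gaps.

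First, the reduction of case (b) to case (a) via Lemma \ref{tetrahedronUse} is not the clean halving you claim: the dual six-tuple \eqref{dualsystem} is 3D consistent only with respect to a \emph{different} initial value problem (initial data $x,x_3,x_{13},x_{23}$), so you must still argue that it is equivalent, after a relabelling of the cube, to a six-tuple of the original combinatorial kind before you may transport the classification back and forth; and you must check the correspondence is a bijection on equivalence classes, not merely a map. Second, the decisive claim that conditions (i)--(ii) ``force the sub-index $k$ to be uniform across the cube and the parameters $\alpha_i$ to be assigned consistently to parallel edges'' is asserted, not proven. This is the heart of the classification: it requires showing that the edge biquadratics (which by Lemma \ref{biquads} agree up to constants) determine the discriminants $r^i$ at the vertices, and that the M\"obius classification of these quartic polynomials by their root configurations is what pins down $k$ and rules out mixing, say, $H_1^\epsilon$ with $H_3^\epsilon$ on adjacent faces. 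Note also that Lemma \ref{biquads} presupposes both 3D consistency and the tetrahedron property, so using it to prune candidates gives only necessary conditions; the completeness half of the theorem (that \emph{no} other systems exist) then needs an argument that the residual vertex-wise M\"obius freedom cannot generate inequivalent solutions of the matching conditions, which your outline does not address. Finally, deferring both the existence of the nine systems and the exclusion of all other candidates to an unspecified symbolic computation leaves the proposal a programme rather than a proof; that is probably unavoidable for a classification of this kind, but the exhaustiveness of the case analysis is precisely what would have to be documented.
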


All these six-tuples can be embedded into the lattice $\Z^{3}$ by reflecting the cubes, see details of this construction in \cite{classification}. The Lagrangian formulation of these systems continued to the whole lattice constitutes the main subject of the present paper.

\section{Three-leg forms and Lagrangians} \label{threelegforms}

As already demonstrated in \cite{BS1}, the Lagrangian structures of quad-equations are closely related to their three-leg forms, which, in turn, can be derived from the biquadratics related to the quad-equations.
\begin{lemma}\label{lem:1param}
For any of the polynomials $Q_k^\epsilon$, $H_k^\epsilon$ $(k=1,2,3)$ there exits a re-scaling of the function $Q$ such that the re-scaled biquadratics depend only on one parameter, namely the parameter assigned to the corresponding edge:
\begin{eqnarray*}
Q^{1,2}(x_1,x_2;\alpha_1,\alpha_2) & = & \kappa(\alpha_1,\alpha_2)h^{1,2}(x_1,x_2;\alpha_1),\\
Q^{1,4}(x_1,x_4;\alpha_1,\alpha_2) & = & \kappa(\alpha_1,\alpha_2)h^{1,4}(x_1,x_4;\alpha_2),\\
Q^{1,3}(x_1,x_3;\alpha_1,\alpha_2) & = & \kappa(\alpha_1,\alpha_2)h^{1,3}(x_1,x_3;\alpha_1-\alpha_2).
\end{eqnarray*}
\end{lemma}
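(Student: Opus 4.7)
The plan is to verify the lemma by a direct case-by-case computation for the six explicit canonical polynomials $Q_k^\epsilon$, $H_k^\epsilon$ ($k=1,2,3$), exploiting the symmetries available in each case to cut the work in half.

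First, I would invoke the symmetries to reduce the amount of calculation. Equations of type Q enjoy the full $D_4$-symmetry of the elementary square, so the four edge biquadratics fall into two equivalence classes $\{Q^{1,2},Q^{3,4}\}$ and $\{Q^{1,4},Q^{2,3}\}$, while the two diagonal biquadratics $\{Q^{1,3},Q^{2,4}\}$ form a third class. For equations of type $\Hvier$ in rhombic form, the weaker symmetry spelled out in the text still relates $Q^{3,4}$ to $Q^{1,2}$ and $Q^{2,3}$ to $Q^{1,4}$ via simultaneous permutation of arguments and swap of parameters $\alpha_1\leftrightarrow\alpha_2$. In both cases it therefore suffices to compute just the three biquadratics $Q^{1,2}$, $Q^{1,4}$, and $Q^{1,3}$ for each of the six polynomials.

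Second, I would carry out those computations directly from the definition $Q^{i,j}=Q_{x_k}Q_{x_\ell}-QQ_{x_k,x_\ell}$. For each canonical polynomial the two edge biquadratics $Q^{1,2}$ and $Q^{1,4}$ share a common factor $\kappa(\alpha_1,\alpha_2)$, and this factor is completely pinned down by demanding that $Q^{1,2}/\kappa$ be independent of $\alpha_2$ and $Q^{1,4}/\kappa$ be independent of $\alpha_1$. Having determined $\kappa$ in this way, one then checks by direct inspection that $Q^{1,3}/\kappa$ depends on the parameters only through the combination $\alpha_1-\alpha_2$. For the $H$-type polynomials the factor $\kappa$ is essentially a function of $\alpha_2-\alpha_1$ (modified by edge-dependent exponentials or polynomials in $H_2^\epsilon$ and $H_3^\epsilon$), while for the $Q$-type polynomials $\kappa$ is an expression in $\alpha_1,\alpha_2$ familiar from the normalizations of the ABS list.

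The main obstacle is purely algebraic bookkeeping. The $Q_2^\epsilon$, $Q_3^\epsilon$, and $H_2^\epsilon$ polynomials contain many monomials, and the correct factor $\kappa$ takes a relatively elaborate form --- for instance $\sinh(2\alpha_1)\sinh(2\alpha_2)\sinh(2(\alpha_1-\alpha_2))$-type expressions for $Q_3^\epsilon$, and $\alpha_1\alpha_2(\alpha_1-\alpha_2)$-type expressions for $Q_2^\epsilon$. There is no conceptual difficulty: the lemma is essentially recording the well-known three-leg structure of ABS and $\Hvier$ quad-equations, in which each leg is labelled by a single edge-parameter. But patience is required in order to verify all six cases by hand, and organizing the computation around the symmetry classes above is the way to keep the bookkeeping manageable.
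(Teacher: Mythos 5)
Your proposal is correct and matches the paper's own proof, which is exactly a direct verification using the explicit scaling factors $\kappa(\alpha_1,\alpha_2)$ tabulated in the appendix (Sections on type Q and type \Hvier). The symmetry reductions you describe are a sensible way to organize the same case-by-case computation.
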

The proof consists of a direct verification with the scaling factors given in Sections \ref{TypeQ}, \ref{TypeH}. In particular, for the polynomials $Q_k^\epsilon$, due to their high symmetry, all three biquadratics are given essentially by one and the same non-degenerate symmetric polynomial $h(x,y;\alpha)=h(y,x;\alpha)$:
\begin{eqnarray*}
 & h^{1,2}(x_1,x_2;\alpha_1)=h(x_1,x_2;\alpha_1),\quad h^{1,4}(x_1,x_4;\alpha_2)=-h(x_1,x_4;\alpha_2), &\\
 & h^{1,3}(x_1,x_3;\alpha_1-\alpha_2)=-h(x_1,x_3;\alpha_1-\alpha_2). &
\end{eqnarray*}
For the polynomials $H_k^\epsilon$, one still has
\[
  h^{1,2}(x_1,x_2;\alpha_1)=h(x_1,x_2;\alpha_1),\quad h^{1,4}(x_1,x_4;\alpha_2)=-h(x_1,x_4;\alpha_2),
\]
where the biquadratic polynomial $h(x,y;\alpha)$ is degenerate and non-symmetric.

The signs on the right-hand sides of the latter equations are responsible for the choice of signs in the three-leg form of quad-equations:

\begin{lemma} \label{three-legs}
Equations $Q_k^{\epsilon}$ and rhombic equations $H_k^{\epsilon}$ possess three-leg forms
\[
\psi(x_{1},x_{2};\alpha_{1})-\psi(x_{1},x_{4};\alpha_{2})-\phi(x_{1},x_{3};\alpha_{1}-\alpha_{2})=0
\]
and
\[
\bar{\psi}(x_{2},x_{1};\alpha_{1})-\bar{\psi}(x_{2},x_{3};\alpha_{2})-
\bar{\phi}(x_{2},x_{4};\alpha_{1}-\alpha_{2})=0,
\]
centered at $x_1$ and at $x_2$, respectively, with the functions $\psi,\bar\psi,\phi$ and $\bar\phi$ listed in Sections~\ref{TypeQ}, \ref{TypeH}. For equations $Q_k^\epsilon$, one has $\psi=\bar\psi=\phi=\bar\phi$. If $x_i=f_i(X_i)$ denotes the uniformizing changes of variables for $\sqrt{r^i(x_i)}$, where $r^i(x_i)$ is the discriminant corresponding to the vertex $x_i$, then the leg functions satisfy the following relations:
\begin{eqnarray} \label{tl1}
\psi(x_{1},x_{2};\alpha_{1}) & = & \dfrac{\partial f_{1}(X_{1})}{\partial X_{1}}\int^{x_{2}}
  \dfrac{d\xi_{2}}{h^{1,2}(x_{1},\xi_{2};\alpha_{1})},\nonumber\\
\psi(x_{1},x_{4};\alpha_{2}) & = & -\dfrac{\partial f_{1}(X_{1})}{\partial X_{1}}\int^{x_{4}}
  \dfrac{d\xi_{4}}{h^{1,4}(x_{1},\xi_{4};\alpha_{2})}, \nonumber \\
\phi(x_{1},x_{3};\alpha_{1}-\alpha_{2}) & = &
  -\dfrac{\partial f_{1}(X_{1})}{\partial X_{1}}\int^{x_{3}}
  \dfrac{d\xi_{3}}{h^{1,3}(x_{1},\xi_{3};\alpha_{1}-\alpha_{2})},
\end{eqnarray}
and
\begin{eqnarray}
\bar{\psi}(x_{2},x_{1};\alpha_{1}) & = & \dfrac{\partial f_{2}(X_{2})}{\partial X_{2}}\int^{x_{1}}
  \dfrac{d\xi_{1}}{h^{2,1}(x_{2},\xi_{1};\alpha_{1})}, \nonumber\\
\bar{\psi}(x_{2},x_{3};\alpha_{2}) & = & -\dfrac{\partial f_{2}(X_{2})}{\partial X_{2}}\int^{x_{3}}
  \dfrac{d\xi_{3}}{h^{2,3}(x_{2},\xi_{3};\alpha_{2})}, \nonumber\\
\bar{\phi}(x_{2},x_{4};\alpha_{1}-\alpha_{2}) & = &
  -\dfrac{\partial f_{2}(X_{2})}{\partial X_{2}}\int^{x_{4}}
  \dfrac{d\xi_{4}}{h^{2,4}(x_{2},\xi_{4};\alpha_{1}-\alpha_{2})}.
\end{eqnarray}
In addition, we have the following properties:
\begin{itemize}
\item Symmetry:
\begin{eqnarray}\label{thl1}
\frac{\partial}{\partial X_{2}}\psi(x_{1},x_{2};\alpha_{1}) & = &
  \frac{\partial}{\partial X_{1}}\bar{\psi}(x_{2},x_{1};\alpha_{1}),\\
\frac{\partial}{\partial X_{3}}\phi(x_{1},x_{3};\alpha_{1}-\alpha_{2}) & = &
  \frac{\partial}{\partial X_{1}}\phi(x_{3},x_{1};\alpha_{1}-\alpha_{2}),\\
\frac{\partial}{\partial X_{4}}\bar{\phi}(x_{2},x_{4};\alpha_{1}-\alpha_{2}) & = &
  \frac{\partial}{\partial X_{2}}\bar{\phi}(x_{4},x_{2};\alpha_{1}-\alpha_{2}).
\end{eqnarray}

\item Derivative with respect to the parameter:
\begin{eqnarray}\label{tl2}
\frac{\partial}{\partial \alpha_{1}}\psi(x_{1},x_{2};\alpha_{1}) & = &
  \frac{\partial}{\partial X_{1}}\log h^{1,2}(x_{1},x_{2};\alpha_{1}),\\
\frac{\partial}{\partial \alpha_{1}}\bar{\psi}(x_{2},x_{1};\alpha_{1}) & = &
  \frac{\partial}{\partial X_{2}}\log h^{1,2}(x_{1},x_{2};\alpha_{1}),\\
\frac{\partial}{\partial \alpha_{1}}\phi(x_{1},x_{3};\alpha_{1}-\alpha_{2}) & = &
  -\frac{\partial}{\partial X_{1}}\log h^{1,3}(x_{1},x_{3};\alpha_{1}-\alpha_{2}),\\
\frac{\partial}{\partial \alpha_{1}}\bar{\phi}(x_{2},x_{4};\alpha_{1}-\alpha_{2}) & = &
  -\frac{\partial}{\partial X_{2}}\log h^{2,4}(x_{2},x_{4};\alpha_{1}-\alpha_{2}).
\end{eqnarray}
\end{itemize}
\end{lemma}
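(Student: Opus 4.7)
The strategy is to take the integral formulas \eqref{tl1} as the \emph{definitions} of $\psi$ and $\phi$, with integration constants (possibly depending on $x_1$ and the parameters) to be pinned down later, and then verify the three-leg relation, the symmetry \eqref{thl1}, and the parameter-derivative formula \eqref{tl2} in turn. An identical argument with $x_1$ and $x_2$ interchanged will yield the corresponding assertions for $\bar\psi$ and $\bar\phi$.

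For the three-leg form centered at $x_1$, fix $x_1$ and set
\[
F(x_2,x_3,x_4) := \psi(x_1,x_2;\alpha_1) - \psi(x_1,x_4;\alpha_2) - \phi(x_1,x_3;\alpha_1-\alpha_2).
\]
View $(x_2,x_3)$ as the free coordinates on $\{Q=0\}$, with $x_4$ determined implicitly, so that $\partial x_4/\partial x_2 = -Q_{x_2}/Q_{x_4}$. Then \eqref{tl1} yields
\[
\partial_{x_2}F = \frac{f_1'(X_1)}{h^{1,2}(x_1,x_2;\alpha_1)} - \frac{f_1'(X_1)\,Q_{x_2}}{h^{1,4}(x_1,x_4;\alpha_2)\,Q_{x_4}},
\]
which vanishes on $\{Q=0\}$ by virtue of the on-shell factorizations $h^{1,2}=Q_{x_3}Q_{x_4}$ and $h^{1,4}=Q_{x_2}Q_{x_3}$, obtained by specializing the defining formula $Q^{i,j}=Q_{x_k}Q_{x_\ell}-Q Q_{x_kx_\ell}$ to $Q=0$. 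A parallel calculation, using additionally $h^{1,3}=Q_{x_2}Q_{x_4}$, shows $\partial_{x_3}F=0$. Hence $F$ depends only on $x_1$ and on the parameters along $\{Q=0\}$, and this residual dependence can be absorbed into the still-free integration constants of $\psi$ and $\phi$ to force $F\equiv 0$.

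The symmetry \eqref{thl1} is immediate from \eqref{tl1}: one computes
\[
\partial_{X_2}\psi(x_1,x_2;\alpha_1) = \frac{f_1'(X_1)\,f_2'(X_2)}{h^{1,2}(x_1,x_2;\alpha_1)},
\]
and the analogous derivative $\partial_{X_1}\bar\psi(x_2,x_1;\alpha_1)$ produces the same expression since $h^{2,1}(x_2,x_1)=h^{1,2}(x_1,x_2)$ as polynomials. The symmetries for $\phi$ and $\bar\phi$ follow identically.

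Finally, \eqref{tl2} is obtained by differentiating under the integral sign in \eqref{tl1}: comparing $\partial_{\alpha_1}\psi(x_1,x_2;\alpha_1)$ with $\partial_{X_1}\log h^{1,2}=f_1'(X_1)\,h^{1,2}_{x_1}/h^{1,2}$ and then differentiating both sides in $x_2$ reduces the claim to the pointwise identity
\[
\partial_{\alpha_1}h^{1,2}(x_1,x_2;\alpha_1) = h^{1,2}_{x_1}\,h^{1,2}_{x_2} - h^{1,2}\,h^{1,2}_{x_1x_2},
\]
which is precisely what the normalization of $h^{1,2}$ fixed in Lemma~\ref{lem:1param} is designed to guarantee. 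It can be checked directly on each of the canonical forms $Q_k^\epsilon$ and $H_k^\epsilon$ using the explicit scaling factors $\kappa(\alpha_1,\alpha_2)$ listed in Sections~\ref{TypeQ} and \ref{TypeH}; an analogous identity for the $h^{i,j}$ appearing in $\phi$, $\bar\psi$, $\bar\phi$ is handled in the same way. I expect this last case-by-case verification of the normalization identity to be the main technical step: the three-leg form and the symmetry relations rest only on the algebraic definition of the biquadratics and are entirely structural, whereas \eqref{tl2} depends intrinsically on the chosen rescaling of $Q$.
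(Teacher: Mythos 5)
Your argument is correct in outline, but it takes a genuinely different route from the paper at its central step. The paper does not re-derive the three-leg form: it invokes Adler's general result (Exercise 6.16 in \cite{DDG}) that $\int^{x_2}d\xi_2/Q^{1,2}+\int^{x_3}d\xi_3/Q^{1,3}+\int^{x_4}d\xi_4/Q^{1,4}=0$ holds on $Q=0$ for \emph{any} irreducible multi-affine $Q$, then replaces $Q^{i,j}$ by $h^{i,j}$ via Lemma~\ref{lem:1param} and fixes the overall free factor of $x_1$ to be $\partial f_1(X_1)/\partial X_1$ exactly as you do, precisely so that the symmetry \eqref{thl1} holds; the parameter derivatives are dismissed as ``direct computation'' against the explicit lists. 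Your gradient computation on the solution variety is in effect a proof of Adler's lemma, which makes the argument self-contained; your reduction of \eqref{tl2} to the pointwise identity $\partial_{\alpha}h=h_{x}h_{y}-h\,h_{xy}$ is a nicer structural formulation than the paper's case-by-case check (and it does hold for the normalized $h^{1,2}$ of each equation in the list). What each approach buys: the paper's is shorter and rests on a citable general fact; yours exposes \emph{why} the normalization of Lemma~\ref{lem:1param} matters, since the cancellation in your $\partial_{x_2}F$ works only because one and the same factor $\kappa(\alpha_1,\alpha_2)$ relates all three $h^{i,j}$ to $Q^{i,j}$ — note that the on-shell factorizations are $Q^{1,2}=Q_{x_3}Q_{x_4}$ etc., not $h^{1,2}=Q_{x_3}Q_{x_4}$ as you wrote, and the common $\kappa$ is what lets you pass from one to the other.

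Two places where your write-up hides real work. First, both of your ``absorb into integration constants'' steps determine the target identity only up to a function of $x_1$ and the parameters; the constants you are actually free to adjust are attached to the individual legs, so for the three-leg form the residual function must split as $c(x_1,\alpha_1)-c(x_1,\alpha_2)-d(x_1,\alpha_1-\alpha_2)$, and the choice made there must then be compatible with the normalization required by \eqref{tl2}. This is true but not automatic, and it is the step that the explicit formulas in Sections~\ref{TypeQ}, \ref{TypeH} settle in the paper. Second, ``an analogous identity \ldots\ is handled in the same way'' is not literally correct for the diagonal legs: since $h^{1,3}$ and $h^{2,4}$ differ from the symmetric polynomial $h$ by a sign, and the expression $h_{x}h_{y}-h\,h_{xy}$ is even in $h$ while $\partial_\alpha h$ is odd, the identity needed for $\phi$ and $\bar\phi$ carries an extra minus sign relative to the one for $\psi$. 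Running your scheme on $Q_1^{0}$, where $h^{1,2}=(x_1-x_2)^2/(2\alpha_1)$ and $h^{1,3}=-(x_1-x_3)^2/(2(\alpha_1-\alpha_2))$, makes this explicit and is worth doing: the sign pattern in \eqref{tl1}--\eqref{tl2} has to be tracked leg by leg rather than inherited from the $h^{1,2}$ case.
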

\begin{proof}
Similar statements for equations of the ABS list were already given in \cite{ABS1, BS1}. Here we provide the reader with slightly more explanations why these statements hold true. Namely, the first statement follows from the result (due to V. Adler and published as Exercise 6.16 in \cite[p.281]{DDG}) that the three-leg form of an arbitrary multi-affine equation $Q=0$, centered at $x_1$, reads
\[
\int^{x_{2}}\frac{d\xi_2}{Q^{1,2}(x_{1},\xi_2)}+\int^{x_{3}}\frac{d\xi_3}{Q^{1,3}(x_{1},\xi_3)}+
\int^{x_{4}}\frac{d\xi_4}{Q^{1,4}(x_{1},\xi_4)}=0.
\]
Due to Lemma \ref{lem:1param}, for equations $H_k^\epsilon$, $Q_k^\epsilon$ one can replace here $Q^{i,j}$ by the biquadratics $h^{i,j}$ depending on one parameter each. Furthermore, all three leg functions in the latter formula can be simultaneously multiplied by an arbitrary function of $x_1$. The choice of this function as $\partial f_1(X_1)/\partial X_1$ ensures the symmetry properties of the derivative of the leg function with respect to the second argument:
\[
\frac{\partial}{\partial X_{2}}\psi(x_{1},x_{2};\alpha_{1})=\frac{\partial f_{1}(X_{1})}{\partial X_{1}}
\cdot\frac{\partial f_{2}(X_{2})}{\partial X_{2}}\cdot \frac{1}{h^{1,2}(x_{1},x_{2};\alpha_{1})}
=\frac{\partial}{\partial X_{1}}\bar{\psi}\left(x_{2},x_{1};\alpha_{1}\right).
\]
The both diagonal biquadratics are symmetric but different, so that one has two different long leg functions $\phi$ and $\bar\phi$. The formulas for the derivatives of the leg functions with respect to parameters are checked by a direct computation.
\end{proof}

\begin{rem}
For trapezoidal equations $H_k^\epsilon$, the notion of short and long legs might be somewhat misleading. We will assume that, for both the rhombic and the trapezoidal versions of $H_k^\epsilon$, the functions $\phi$ and $\bar\phi$ correspond to two non-degenerate biquadratics (black and white, respectively), while the functions $\psi$ and $\bar\psi$ correspond to four degenerate biquadratics, with the first argument (the center point of the three-leg form) being black, resp. white. Thus, the three-leg forms of a trapezoidal equation $H_k^\epsilon$, as depicted on Figure \ref{fig:2b}, read:
\[
\phi(x_1,x_2;\alpha_1)-\psi(x_1,x_4;\alpha_2)+\psi(x_1,x_3;\alpha_2-\alpha_1)=0
\]
(centered at a black vertex $x_1$), resp.
\[
\bar\phi(x_3,x_4;\alpha_1)-\bar\psi(x_3,x_2;\alpha_2)+\bar\psi(x_3,x_1;\alpha_2-\alpha_1)=0
\]
(centered at a white vertex $x_3$).
\end{rem}

\begin{rem}
We recall that the three-leg forms of quad-equations yield the so called Laplace type equations of the stars of the vertices of the underlying quad-graph. These Laplace type equations say that the sum of long-leg functions for diagonals of all quadrilaterals adjacent to a vertex is equal to zero. For the rhombic equations $H_k^\epsilon$, the long-leg functions $\phi,\bar\phi$ are the same as for the type Q equations, so that they do not lead to new Laplace type equations. For the trapezoidal equations $H_k^\epsilon$, the long-leg functions are $\psi,\bar\psi$, so that one arrives at novel Laplace type equations. For instance, the Laplace type equations at the vertices $x_3$ (black) and $x_6$ (white) on Figure~\ref{fig:8} read:
\begin{figure}[htbp]
   \centering
  \includegraphics[scale=1.0]{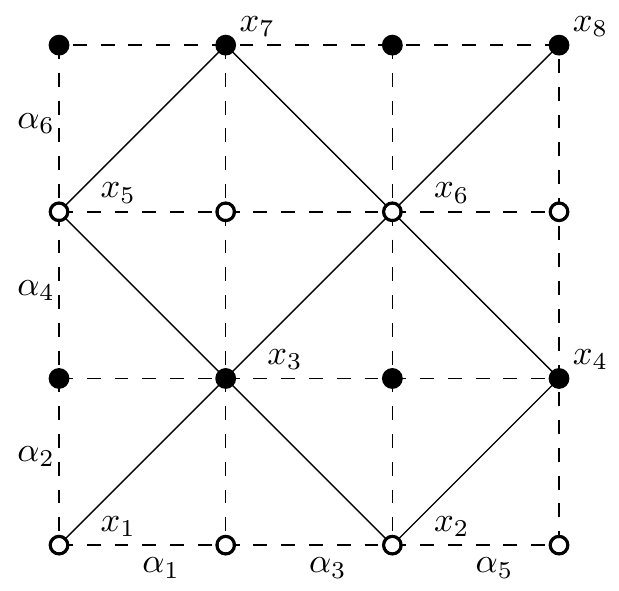}
   \caption{Laplace type equations for a $\mathbb Z^2$ system of trapezoidal equations $H_k^\epsilon$}
   \label{fig:8}
\end{figure}
\[
\psi(x_3,x_6;\alpha_4-\alpha_3)-\psi(x_3,x_5;\alpha_4-\alpha_1)+\psi(x_3,x_1;\alpha_2-\alpha_1)
   -\psi(x_3,x_2;\alpha_2-\alpha_3)=0,
\]
resp.
\[
\bar\psi(x_6,x_8;\alpha_6-\alpha_5)-\bar\psi(x_6,x_7;\alpha_6-\alpha_3)
+\bar\psi(x_6,x_3;\alpha_4-\alpha_3)-\bar\psi(x_6,x_4;\alpha_4-\alpha_5)=0.
\]
\end{rem}

We are now in a position to construct functions which will play the role of Lagrangians in the action functionals for system of quad-equations.

\begin{lemma} \label{lagr}
For equations $Q_k^{\epsilon}$, $H_k^{\epsilon}$, there exist functions $L\left(X_{1},X_{2};\alpha_{1}\right)$ (defined on directed edges starting at a black vertex and ending at a white vertex),
\[
\Lambda\left(X_{1},X_{3};\alpha_{1}-\alpha_{2}\right)=
\Lambda\left(X_{3},X_{1};\alpha_{1}-\alpha_{2}\right)
 \]
 (defined on edges connecting two black vertices), and
 \[
 \bar{\Lambda}\left(X_{2},X_{4};\alpha_{1}-\alpha_{2}\right)=
\bar{\Lambda}\left(X_{4},X_{2};\alpha_{1}-\alpha_{2}\right)
 \]
 (defined on edges connecting two white vertices) such that, upon changes of variables $x_{i}=f_{i}\left(X_{i}\right)$ introduced in Lemma \ref{three-legs}, there hold the following relations:
\begin{eqnarray}
& \psi\left(x_{1},x_{2};\alpha_{1}\right)=\dfrac{\partial}{\partial X_{1}}L\left(X_{1},X_{2};\alpha_{1}\right),\quad
\bar{\psi}\left(x_{2},x_{1};\alpha_{1}\right)=\dfrac{\partial}{\partial X_{2}}L\left(X_{1},X_{2};\alpha_{1}\right), & \\
& \phi\left(x_{1},x_{3};\alpha_{1}-\alpha_{2}\right)=-\dfrac{\partial}{\partial X_{1}}\Lambda\left(X_{1},X_{3};\alpha_{1}-\alpha_{2}\right), & \\
& \bar{\phi}\left(x_{2},x_{4};\alpha_{1}-\alpha_{2}\right)=-\dfrac{\partial}{\partial X_{2}}\bar{\Lambda}\left(X_{2},X_{4};\alpha_{1}-\alpha_{2}\right). &
\end{eqnarray}
These functions (defined up to arbitrary additive functions of the respective parameters) can be chosen so that
\begin{eqnarray}
\frac{\partial}{\partial\alpha_{1}}L\left(X_{1},X_{2};\alpha_{1}\right) & = &
\log h^{1,2}\left(x_{1},x_{2};\alpha_{1}\right),\\
\frac{\partial}{\partial\alpha_{1}}\Lambda\left(X_{1},X_{3};\alpha_{1}-\alpha_{2}\right) & = &
\log h^{1,3}\left(x_{1},x_{3};\alpha_{1}-\alpha_{2}\right),\\
\frac{\partial}{\partial \alpha_{1}}\bar{\Lambda}\left(X_{2},X_{4};\alpha_{1}-\alpha_{2}\right) & = &
\log h^{2,4}\left(x_{2},x_{4};\alpha_{1}-\alpha_{2}\right).
\end{eqnarray}
\end{lemma}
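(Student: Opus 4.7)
The plan is to construct $L$, $\Lambda$, $\bar{\Lambda}$ as potentials of closed $1$-forms whose coefficients are the leg functions built in Lemma~\ref{three-legs}, and then exploit the parameter-derivative formulas \eqref{tl2} to fix the additive $\alpha$-dependent integration constants so that the desired parameter relations hold. All of the analytic input is already packaged into the symmetry identities \eqref{thl1} and the parameter identities \eqref{tl2}.

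First I construct $L(X_1,X_2;\alpha_1)$. The symmetry relation $\partial_{X_2}\psi(x_1,x_2;\alpha_1)=\partial_{X_1}\bar\psi(x_2,x_1;\alpha_1)$ from \eqref{thl1} says precisely that
$$\omega_L := \psi(x_1,x_2;\alpha_1)\, dX_1 + \bar\psi(x_2,x_1;\alpha_1)\, dX_2$$
is a closed $1$-form in the variables $X_1,X_2$ with $\alpha_1$ held fixed. By the Poincar\'e lemma on a simply connected neighborhood, $\omega_L = dL$ for some function $L$, determined up to an additive function of $\alpha_1$. The two required identities expressing $\psi$ and $\bar\psi$ as partial derivatives of $L$ are immediate.

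Next I construct $\Lambda$ and $\bar\Lambda$ by the same mechanism. The symmetry $\partial_{X_3}\phi(x_1,x_3;\alpha_1-\alpha_2)=\partial_{X_1}\phi(x_3,x_1;\alpha_1-\alpha_2)$ from \eqref{thl1} makes the $1$-form
$$\omega_\Lambda := -\phi(x_1,x_3;\alpha_1-\alpha_2)\, dX_1 - \phi(x_3,x_1;\alpha_1-\alpha_2)\, dX_3$$
closed, hence exact: $\omega_\Lambda = d\Lambda$. Since $x_1$ and $x_3$ are both black vertices with the same discriminant ($r^1 = r^3$), the uniformizing map $f$ is the same at both, so $\omega_\Lambda$ is invariant under the swap $X_1 \leftrightarrow X_3$. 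Consequently $\Lambda(X_1,X_3)-\Lambda(X_3,X_1)$ is a constant which can be set to zero by an appropriate choice of additive constant, yielding the required symmetry. The construction of $\bar\Lambda$ is identical, using the corresponding white--white symmetry from \eqref{thl1} and $r^2=r^4$.

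Finally I adjust the $\alpha$-dependence. From \eqref{tl2},
$$\partial_{\alpha_1}\partial_{X_1}L \;=\; \partial_{\alpha_1}\psi \;=\; \partial_{X_1}\log h^{1,2},\qquad \partial_{\alpha_1}\partial_{X_2}L \;=\; \partial_{\alpha_1}\bar\psi \;=\; \partial_{X_2}\log h^{1,2},$$
so $\partial_{\alpha_1}L - \log h^{1,2}(x_1,x_2;\alpha_1)$ has vanishing partials in $X_1$ and $X_2$ and hence depends only on $\alpha_1$. Absorbing this into the free additive function of $\alpha_1$ in $L$ realizes $\partial_{\alpha_1}L = \log h^{1,2}$. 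The analogous adjustments for $\Lambda$ and $\bar\Lambda$ use the remaining identities in \eqref{tl2}. The only genuine point of care in the whole argument is verifying the symmetry of $\Lambda$ and $\bar\Lambda$; everything else is a routine application of the Poincar\'e lemma together with matching of mixed partials, and no fresh computation beyond what is already contained in Lemma~\ref{three-legs} is required.
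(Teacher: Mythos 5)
Your argument is correct and is precisely the one the paper leaves implicit: Lemma~\ref{lagr} is stated without proof, but the symmetry relations \eqref{thl1} and parameter-derivative relations \eqref{tl2} in Lemma~\ref{three-legs} are formulated exactly so that the leg functions form closed one-forms whose potentials (fixed up to additive functions of the parameters) are the Lagrangians, as in the reference [BS10]. Your only extra point of care, the symmetry of $\Lambda$ and $\bar\Lambda$, is handled correctly via the swap-invariance of the one-form coming from $f_1=f_3$ (and $f_2=f_4$); note the discrepancy $\Lambda(X_1,X_3)-\Lambda(X_3,X_1)$ is antisymmetric, so once it is constant in $X_1,X_3$ it vanishes automatically.
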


The main theorem concerning the Lagrangians for a single quad-equation, which serves as the major ingredient of the proof of the flip invariance of the action functionals,  reads:
\begin{theo} \label{th:single}
For equations $H_{k}^{\epsilon}$, the Lagrangians $L$, $\Lambda$ and $\bar{\Lambda}$ can be chosen so that the following relation holds on solutions of the equation $Q(x_1,x_2,x_3,x_4)=0$:

a) rhombic case with the enumeration of fields as on Figure~\ref{fig:2a}:
\begin{multline} \label{singlequad rhomb} L\left(X_{1},X_{2};\alpha_{1}\right)+L\left(X_{3},X_{4};\alpha_{1}\right)-
L\left(X_{1},X_{4};\alpha_{2}\right)-L\left(X_{3},X_{2};\alpha_{2}\right)\\
-\Lambda\left(X_{1},X_{3};\alpha_{1}-\alpha_{2}\right)-
\bar{\Lambda}\left(X_{2},X_{4};\alpha_{1}-\alpha_{2}\right)=0;
\end{multline}

b) trapezoidal case with the enumeration of fields as on Figure~\ref{fig:2b}:
\begin{multline} \label{singlequad trapez}
\Lambda(X_1,X_2;\alpha_1)+\bar{\Lambda}(X_3,X_4;\alpha_1)
-L(X_1,X_4;\alpha_2)-L(X_2,X_3;\alpha_2)\\
+L(X_1,X_3;\alpha_2-\alpha_1)+L(X_2,X_4;\alpha_2-\alpha_1)=0.
\end{multline}

For equations $Q_k^\epsilon$, the analogous statement holds true if one sets all Lagrangians $L,\Lambda$, and $\bar\Lambda$ equal (to $\Lambda$, say).
\end{theo}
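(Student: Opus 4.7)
The plan is to prove the identity in two stages. First I would show that the LHS, viewed as a function $\mathcal{M}(X_1,\ldots,X_4;\alpha_1,\alpha_2)$ of the coordinates, has vanishing gradient along the three-dimensional solution manifold $\{Q=0\}$. Then I would use the remaining freedom in the Lagrangians---each of $L,\Lambda,\bar\Lambda$ is defined only up to an additive function of its parameter---to kill the resulting $\alpha$-dependent constant.

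For the rhombic case, I would compute $\partial_{X_i}\mathcal{M}$ for each $i=1,2,3,4$ using Lemma~\ref{lagr}. At the black vertex $x_1$ only the three terms $L(X_1,X_2;\alpha_1)$, $L(X_1,X_4;\alpha_2)$ and $\Lambda(X_1,X_3;\alpha_1-\alpha_2)$ contribute, and their $X_1$-derivatives assemble into precisely the three-leg form $\psi(x_1,x_2;\alpha_1)-\psi(x_1,x_4;\alpha_2)-\phi(x_1,x_3;\alpha_1-\alpha_2)$ of Lemma~\ref{three-legs}, which vanishes on solutions. By the black/black symmetry of the rhombic polynomial the same argument at the other black vertex $x_3$ produces the three-leg form centered there, while at the white vertices $x_2,x_4$ the barred leg functions $\bar\psi,\bar\phi$ take over and one invokes the three-leg forms centered at those vertices. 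For the trapezoidal case the mechanism is the same, but one uses instead the three-leg forms listed in the Remark after Lemma~\ref{three-legs}, whose sign pattern differs because the non-degenerate biquadratics now sit on the black-black and white-white edges rather than on the diagonals.

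Since all four partial derivatives vanish on the solution manifold, $\mathcal{M}|_{Q=0}$ reduces to a function $C(\alpha_1,\alpha_2)$ of the parameters alone. To pin this down I would compute $\partial_{\alpha_1}C$ via the parameter-derivative formulas of Lemma~\ref{lagr} (such as~\eqref{tl2}), which rewrite it as a logarithm of a product of biquadratics $h^{i,j}(x_i,x_j;\cdot)$ evaluated at the four vertices. To show this is actually independent of the $x_i$ I would invoke the factorization identity $Q^{i,j}(x_i,x_j)=Q_{x_k}Q_{x_\ell}$ valid on $\{Q=0\}$, which in turn implies $Q^{i,j}Q^{k,\ell}=Q^{i,k}Q^{j,\ell}$ on solutions for any complementary pairing. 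After dividing by the scaling factors $\kappa$ from Lemma~\ref{lem:1param}, this forces the relevant combination of biquadratics to collapse to a purely parameter-dependent expression; integrating then determines $C(\alpha_1,\alpha_2)$ up to a constant which can be absorbed into the additive freedom of the Lagrangians.

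The main obstacle will be the trapezoidal case. There $L$ appears both on black-white edges (parameter $\alpha_2$) and on the two diagonals (parameter $\alpha_2-\alpha_1$), while the single non-degenerate biquadratic at each vertex lies on a same-color edge, so the $X_i$-derivative bookkeeping mixes $\psi$ and $\phi$ in an asymmetric way. Keeping accurate track of the signs coming from the trapezoidal three-leg form (which, as the Remark emphasises, is not obtained from the rhombic one by merely permuting variables), and checking that the $\alpha$-derivative collapses for both non-degenerate biquadratic pairs simultaneously, are the delicate points. For $Q_k^\epsilon$ all these complications disappear: setting $L=\Lambda=\bar\Lambda$ recovers the classical type-Q Lagrangian relation already established in~\cite{BS1}.
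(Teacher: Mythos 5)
Your proposal is correct and follows essentially the same route as the paper, which likewise proves the identity by showing that the gradient of the left-hand side vanishes along the solution manifold of $Q=0$ via the equivalence of $Q=0$ with its three-leg forms (deferring the details to the analogous Theorem~1 of \cite{BS1}). Your additional second stage --- pinning down the residual parameter-dependent constant through the $\alpha$-derivatives of the Lagrangians and the on-shell factorization $Q^{i,j}=Q_{x_k}Q_{x_\ell}$ --- is not spelled out in the paper's proof but is precisely the mechanism implicit in the phrase ``can be chosen so that''.
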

\begin{proof}
The proof of this theorem is completely analogous to the one of Theorem~1 in \cite{BS1} and is obtained by showing that the gradient of the function on the left-hand side vanishes on the solutions of $Q=0$ due to the equivalence of the equation $Q=0$ to its three-leg forms.
\end{proof}

It is important to notice that in a 3D consistent six-tuple of quad-equations possessing the tetrahedron property, the choice of Lagrangians of every quad-equation (be it the face equation or the tetrahedron equation) can be made in a consistent manner, i.e., on every edge (or diagonal) the two Lagrangians coming from the two different equations of the quadrilaterals sharing this edge coincide. This follows from the next result which was given for equations of the ABS list in \cite{ABS2}, and was proven in full generality in \cite{classification}):
\begin{lemma} \label{biquads}
Consider a 3D consistent six-tuple~\eqref{system} of quad-equations possessing the tetrahedron property. Then:
\begin{enumerate}
\item For any edge of the cube, the two biquadratics corresponding to this edge coincide up to a constant factor; for example
\[
\frac{A^{0,1}}{C^{0,1}}={\rm const},\qquad
\frac{B^{0,2}}{A^{0,2}}={\rm const},\qquad
\frac{C^{0,3}}{B^{0,3}}={\rm const}.
\]
\item The product of these factors around one vertex is equal to $-1$; for example, for the vertex $x$ one has:
\[
\frac{A^{0,1}}{C^{0,1}}\cdot\frac{B^{0,2}}{A^{0,2}}\cdot\frac{C^{0,3}}{B^{0,3}}=-1.
\]
One can choose the normalization of the polynomials $A,\ldots,\bar C$ so that all such factors (for all 12 edges of the cube) are equal to $-1$.
\end{enumerate}
\end{lemma}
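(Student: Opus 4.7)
My plan is to treat the two parts separately, starting with (1). The biquadratic $Q^{i,j}(x_i, x_j)$ admits the following interpretation: when $x_i, x_j$ are held fixed, the equation $Q=0$ defines a bi-affine correspondence between $x_k$ and $x_\ell$, and $Q^{i,j}$ is precisely the discriminant that vanishes when this correspondence degenerates into a union of horizontal and vertical lines. For the edge $(x, x_1)$ shared by the $A$-face and the $C$-face, I would argue that 3D consistency forces the two loci $\{A^{0,1}=0\}$ and $\{C^{0,1}=0\}$ in the $(x, x_1)$-plane to coincide: at a point of $\{A^{0,1}=0\}\setminus\{C^{0,1}=0\}$, one could choose $x_2$ and $x_3$ so that computing $x_{123}$ through the $A$-path would encounter a controlled branching while the $C$-path would remain single-valued, violating consistency. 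Since $A^{0,1}$ and $C^{0,1}$ are both bi-quadratic in $(x, x_1)$ and define the same zero locus (irreducible for the polynomials appearing in Theorem~\ref{th:class}), they must be proportional with a constant ratio.

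For part (2), my approach is via the three-leg forms at $x$ provided by Lemma~\ref{three-legs}. Each of the three face equations $A=0$, $B=0$, $C=0$ centered at $x$ furnishes a relation involving leg functions whose derivatives are, up to a common prefactor $\partial f(X)/\partial X$, the reciprocals of the three biquadratics on the three edges through $x$. Rescaling a biquadratic by a constant simply rescales the corresponding leg function by the reciprocal. If one picks one leg function per edge as a reference and reads off how the two face equations sharing that edge express their own leg function in terms of this reference, the product of the three rescaling constants around $x$ must equal $-1$; the sign comes from the fact that on each shared edge the two leg functions enter the respective three-leg relations with opposite signs (the short legs with a plus and the long leg with a minus, as dictated by the conventions of Lemma~\ref{three-legs}).

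The main obstacle is the last sentence of (2), namely that one can simultaneously rescale $A, \bar A, \ldots, \bar C$ so that every one of the twelve edge ratios becomes $-1$. This is a gauge-fixing problem with six free scalars and twelve normalizations to impose. I would fix the scalings of $A, B, C$ so that all three ratios at the vertex $x$ equal $-1$; the vertex-product identity then determines how $\bar A, \bar B, \bar C$ must be rescaled so that the ratios at the adjacent vertices $x_1, x_2, x_3$ also become $-1$. Checking that this choice automatically delivers $-1$ on the four edges meeting the remaining vertices requires running the cocycle identity of (1) around every face of the cube. This combinatorial bookkeeping, showing compatibility of a system that is a priori over-determined, is where I expect the real work.
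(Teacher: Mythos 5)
The paper itself offers no proof of this lemma: it is quoted from \cite{ABS2} (for the ABS list) and \cite{classification} (for the general case), so your attempt has to be measured against the arguments known from those sources, and against that standard it has genuine gaps. In part (1) your ``controlled branching'' argument invokes only 3D consistency, whereas the hypothesis --- and every known proof --- relies essentially on the tetrahedron property. The standard route is to observe that, with $x,x_1$ fixed, the equations $A=0$ and $C=0$ define M\"obius maps $x_2\mapsto x_{12}$ and $x_3\mapsto x_{13}$ whose determinants coincide (up to sign) with $A^{0,1}(x,x_1)$ and $C^{0,1}(x,x_1)$; substituting these into $\bar B=0$, identifying the resulting multi-affine polynomial with a multiple of the tetrahedron polynomial $\bar K(x_1,x_2,x_3,x_{123})$, and using the multiplicative behaviour of biquadratics under M\"obius changes of variables yields the proportionalities. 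You never explain why a point of $\{A^{0,1}=0\}\setminus\{C^{0,1}=0\}$ actually contradicts consistency: the vanishing of a biquadratic makes the induced map $x_2\mapsto x_{12}$ degenerate (constant), not multivalued, and such singular configurations do occur for perfectly consistent systems. Moreover your parenthetical irreducibility claim is false precisely in the situation this paper is about: degenerate biquadratics of type H equations factor by definition (for $H_1^\epsilon$ one has $h^{2,4}\sim(x_2-x_4)^2$ and $h^{1,2}\sim 1-\epsilon^2x_2^2$), so equality of zero loci alone does not force proportionality with the correct multiplicities.

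In part (2) the sign argument is close to circular: the signs with which the leg functions enter the three-leg forms of Lemma \ref{three-legs} are consequences of the normalizations $h^{1,2}=h$, $h^{1,4}=-h$, i.e.\ of exactly the sign structure the lemma is supposed to establish, so they cannot be used as input. A non-circular route to ``the product is $\pm1$'' goes through the discriminants: $r^0(x)$ computed from any biquadratic of one and the same face polynomial is the same, and $r^i$ scales quadratically under $Q^{i,j}\mapsto\lambda Q^{i,j}$, whence the product of the three edge factors around $x$ squares to $1$; identifying the sign as $-1$ rather than $+1$ still needs a separate computation that you do not supply. Finally, the simultaneous normalization making all twelve edge factors equal to $-1$ --- six scaling freedoms against twelve constraints, whose compatibility must be extracted from the relations of items (1) and (2) at all eight vertices --- is explicitly left by you as ``the real work''; since that over-determined compatibility is the actual content of the last assertion, the proposal cannot be considered a proof of it.
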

As a consequence of this result, for consistent systems of quad-equations, each edge $\mathbb Z^m$ carries a well defined leg function and a well defined Lagrangian, in the sense that these objects do not depend on the elementary square of $\mathbb Z^m$ adjacent to this edge. Moreover, due to the construction of Lemma \ref{tetrahedronUse}, the same is true for diagonals of elementary squares of $\mathbb Z^m$, which support leg functions and Lagrangians coming alternatively from edges of black and white tetrahedra supporting their own quad-equations.

\section{Flip invariance of action functional} \label{flipQuad}

Following the main idea of \cite{LN}, we consider the functional for a multidimensionally consistent system of quad-equations defined on any quad-surfaces $\Sigma$ in the $\mathbb Z^m$ by the formula
\begin{equation}
S=\sum_{\sigma_{ij}\in\Sigma}\Ell(\sigma_{ij}),
\end{equation}
where $\Ell(\sigma_{ij})$ is the value of a discrete Lagrangian two-form on an elementary (oriented) square $\sigma_{ij}=(n,n+e_{i},n+e_{i}+e_{j},n+e_{j})$,
\begin{equation*}\label{func2}
\Ell\left(\sigma_{ij}\right)=\Ell\left(X,X_{i},X_{j};\alpha_{i},\alpha_{j}\right).
\end{equation*}
For equations $H_k^\epsilon$ and $Q_k^\epsilon$ defined on black-and-white (but not necessarily bipartite) quad-graphs, one can encounter eight types of elementary squares depicted on Figure~\ref{fig:9}.
\begin{figure}[htbp]
   \centering
   \subfloat[]{\label{fig:9a}\includegraphics[scale=1.0]{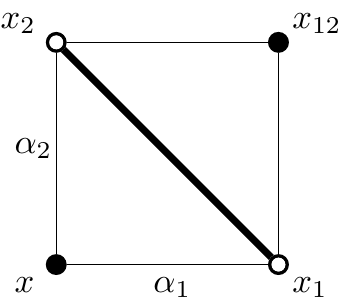}}\qquad
   \subfloat[]{\label{fig:9b}\includegraphics[scale=1.0]{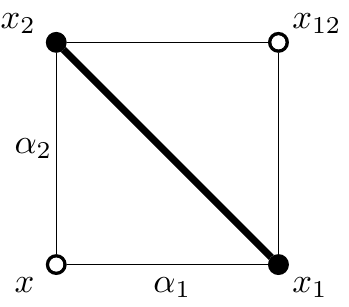}}\qquad
   \subfloat[]{\label{fig:9c}\includegraphics[scale=1.0]{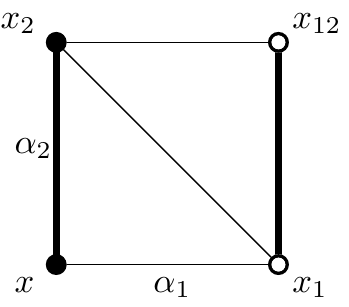}}\\
   \subfloat[]{\label{fig:9d}\includegraphics[scale=1.0]{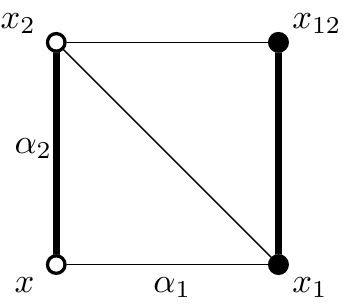}}\qquad
   \subfloat[]{\label{fig:9e}\includegraphics[scale=1.0]{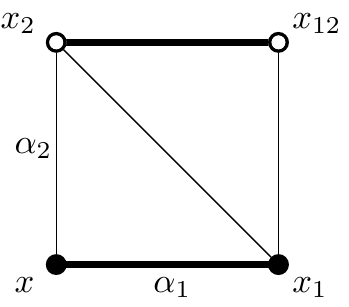}}\qquad
   \subfloat[]{\label{fig:9f}\includegraphics[scale=1.0]{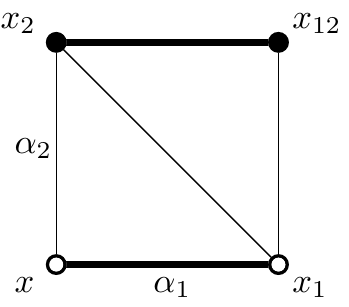}}\\
   \subfloat[]{\label{fig:9g}\includegraphics[scale=1.0]{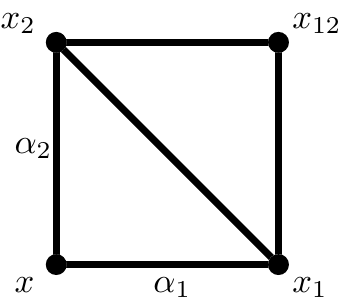}}\qquad
   \subfloat[]{\label{fig:9h}\includegraphics[scale=1.0]{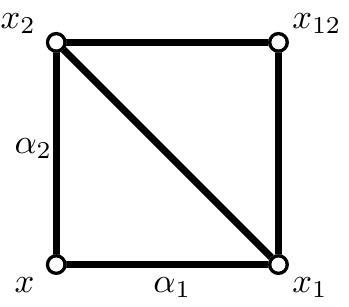}}
   \caption{Eight types of elementary squares of a black-and-white lattice}
   \label{fig:9}
\end{figure}

The corresponding values of the discrete two-form are:
\begin{align}
\Ell(X,X_1,X_2;\alpha_1,\alpha_2) &= L(X,X_1;\alpha_1)-L(X,X_2;\alpha_2)-\bar\Lambda(X_1,X_2,\alpha_1-\alpha_2); \tag{a} \\
\Ell(X,X_1,X_2;\alpha_1,\alpha_2) &= L(X_1,X;\alpha_1)-L(X_2,X;\alpha_2)-\Lambda(X_1,X_2,\alpha_1-\alpha_2); \tag{b} \\
\Ell(X,X_1,X_2;\alpha_1,\alpha_2) &= L(X,X_1;\alpha_1)-\Lambda(X,X_2;\alpha_2)-L(X_2,X_1,\alpha_1-\alpha_2); \tag{c} \\
\Ell(X,X_1,X_2;\alpha_1,\alpha_2) &= L(X_1,X;\alpha_1)-\bar\Lambda(X,X_2;\alpha_2)-L(X_1,X_2,\alpha_1-\alpha_2); \tag{d} \\
\Ell(X,X_1,X_2;\alpha_1,\alpha_2) &= \Lambda(X,X_1;\alpha_1)-L(X,X_2;\alpha_2)+L(X_1,X_2,\alpha_2-\alpha_1); \tag{e} \\
\Ell(X,X_1,X_2;\alpha_1,\alpha_2) &= \bar\Lambda(X,X_1;\alpha_1)-L(X_2,X;\alpha_2)+L(X_2,X_1,\alpha_2-\alpha_1); \tag{f} \\
\Ell(X,X_1,X_2;\alpha_1,\alpha_2) &= \Lambda(X,X_1;\alpha_1)-\Lambda(X,X_2;\alpha_2)-\Lambda(X_1,X_2,\alpha_1-\alpha_2); \tag{g} \\
\Ell(X,X_1,X_2;\alpha_1,\alpha_2) &= \bar\Lambda(X,X_1;\alpha_1)-\bar\Lambda(X,X_2;\alpha_2)-\bar\Lambda(X_1,X_2,\alpha_1-\alpha_2). \tag{h}
\end{align}
The choice of the sign of the third term in the expressions (c)--(f) is dictated by the requirement that the two terms with the Lagrangian $L$ have opposite signs.

With these definitions, one has the value of the action functional on any quad-surface in $\mathbb Z^m$, including $\Z^2$. In the latter case the action can be represented as the sum of Lagrangians over all edges of $\Z^2$ and over all diagonals running from north-west to south-east (or in the opposite direction, since they could be directed). The term ``action functional'' is justified by the following statement.
\begin{theo}\label{quad1}
For any equation $H_k^\epsilon$ (rhombic or trapezoidal) or $Q_k^\epsilon$ on the regular square lattice $\Z^{2}$, the solutions are critical points of the action functional.
\end{theo}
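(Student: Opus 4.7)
The plan is a standard variational argument: for each interior vertex $n\in\Z^2$, compute the partial derivative $\partial S/\partial X$ where $X$ is the field value at $n$, and verify that it vanishes on solutions. First, localize: only those elementary squares $\sigma_{ij}$ in whose Lagrangian $\Ell$ the value $X$ appears explicitly contribute to this derivative. Inspection of formulas (a)--(h) shows that $\Ell$ depends on three of the four square vertices (the base $X$ and the two neighbors $X_1,X_2$) but never on $X_{12}$; hence of the four elementary squares meeting $n$ only three contribute to $\partial S/\partial X$, and the particular formula used in each is dictated by the coloring (black/white) of $n$ together with its neighbors and by the case at hand (rhombic or trapezoidal $H_k^\epsilon$, or $Q_k^\epsilon$).

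Next, Lemma~\ref{lagr} converts each partial derivative of $L$, $\Lambda$, or $\bar\Lambda$ with respect to one of its arguments into one of the leg functions $\psi,\bar\psi,\phi,\bar\phi$ evaluated on the corresponding edge (for $\psi,\bar\psi$) or diagonal (for $\phi,\bar\phi$) emanating from $n$. In this way $\partial S/\partial X$ becomes a finite signed sum of leg functions over the star of $n$ in the graph obtained from $\Z^2$ by adjoining the relevant diagonals of the three adjacent squares.

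The crux is to recognize this sum as a combination of three-leg forms centered at $n$, one for each quad-equation adjacent to $n$. By Lemma~\ref{three-legs} and the two Remarks following it, every quad-equation of the system possesses a three-leg form centered at each of its vertices, and the vanishing of such a form is equivalent to the quad-equation. The signs appearing in (a)--(h) have been chosen exactly so that the short-leg contributions $\psi,\bar\psi$ on each edge of $\Z^2$ (shared by two adjacent squares) assemble correctly, and the long-leg contributions $\phi,\bar\phi$ on the diagonals carry the right sign. On a solution of the system each such three-leg form vanishes, and so $\partial S/\partial X=0$ at every interior $n$.

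The main obstacle is the case analysis required to make this regrouping explicit: depending on whether $n$ is black or white, on which coloring pattern (rhombic or trapezoidal) is in force, and on which of (a)--(h) therefore governs each of the three contributing squares, the cancellations of short-leg terms on shared edges and the pairing of long-leg terms on the two diagonals through $n$ happen in slightly different configurations. In each configuration, however, the bookkeeping parallels that of Theorem~1 in \cite{BS1} for the ABS list; the only genuinely new feature is that the two same-colored diagonals through $n$ carry the two distinct long-leg functions $\phi$ and $\bar\phi$ (via $\Lambda$ and $\bar\Lambda$), so that both must be accounted for simultaneously.
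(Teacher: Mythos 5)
The paper states Theorem~\ref{quad1} without proof, so there is nothing to compare against line by line; the intended justification is precisely the argument you outline, namely that the discrete Euler--Lagrange equation at a vertex reduces, via the three-leg forms, to the Laplace type equations discussed in the second Remark of Section~\ref{threelegforms}, in parallel with the ABS case treated in \cite{ABS1,BS1}. Your outline is sound: $\Ell$ never involves $X_{12}$, so only three of the four squares at $n$ contribute; Lemma~\ref{lagr} turns $\partial S/\partial X$ into a signed sum of leg functions on the star of $n$; and this sum is annihilated by the three-leg forms of Lemma~\ref{three-legs} centered at $n$. Two details deserve care when you make the regrouping explicit. First, only the two north-west/south-east diagonals at $n$ occur in the action, so the cancellation is not literally ``one three-leg form per adjacent quad'': the four short-leg terms must be eliminated against the two surviving long-leg terms using the three-leg forms of the particular squares whose diagonals actually appear in $\partial S/\partial X$. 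Second, your closing sentence is inaccurate: in the rhombic case both diagonals through a given $n$ join vertices of the same color as $n$, so both carry $\Lambda$ (if $n$ is black) or both carry $\bar\Lambda$ (if $n$ is white), and only one of $\phi,\bar\phi$ appears at each vertex; in the trapezoidal case the diagonals through $n$ join vertices of opposite colors and carry $L$, the corresponding long legs being $\psi,\bar\psi$ as stipulated in the first Remark of Section~\ref{threelegforms}. Neither point invalidates the strategy, which is the correct (and evidently the intended) one.
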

Our main result reads as follows:
\begin{theo}
For any 3D consistent system consisting of quad-equations $H_k^\epsilon$ and/or $Q_k^\epsilon$ (listed in Theorem \ref{th:class}), the action functional is flip-invariant in the sense that the following relation is satisfied on solutions:
\begin{equation}\label{flip inv}
\Delta_{1}\Ell\left(X,X_{2},X_{3};\alpha_{2},\alpha_{3}\right)+
\Delta_{2}\Ell\left(X,X_{3},X_{1};\alpha_{3},\alpha_{1}\right)+
\Delta_{3}\Ell\left(X,X_{1},X_{2};\alpha_{1},\alpha_{2}\right)=0.
\end{equation}
Here $\Delta_{i}$ denotes the difference operator that acts on vertex functions according to $$\Delta_{i}f\left(x\right)=f\left(x_{i}\right)-f\left(x\right).$$
\end{theo}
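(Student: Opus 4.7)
The identity (\ref{flip inv}) is local — it involves only the eight vertices of an elementary cube of $\mathbb{Z}^3$ — so the plan is to verify it on each of the nine 3D-consistent cubes listed in Theorem \ref{th:class}. Writing $S^-$ and $S^+$ for the sums of $\Ell$'s over the three faces incident to $X$ and $X_{123}$ respectively, the claim (\ref{flip inv}) becomes $S^+-S^-=0$ on the solutions of the six-tuple — solutions that automatically satisfy the two tetrahedron equations $K(x,x_{12},x_{13},x_{23})=0$ and $\bar K(x_1,x_2,x_3,x_{123})=0$ as well.

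The central tool is Theorem \ref{th:single}, applied not only to the six face equations but also to the two tetrahedron equations, which are themselves quad-equations of type $H_k^\epsilon$ or $Q_k^\epsilon$ (Theorem \ref{th:class}). Each such application yields a vanishing linear combination, on the corresponding solutions, of six Lagrangian terms of types $L$, $\Lambda$, $\bar\Lambda$ attached to the four edges and two diagonals of the quadrilateral in question. For a face equation, these attach to four cube edges and two face-diagonals of the cube; for a tetrahedron equation, they attach to the six edges of the corresponding inscribed tetrahedron of the cube, all of which are face-diagonals. By Lemma \ref{biquads} and the accompanying normalisation, the Lagrangian attached to any given edge or diagonal of the cube is unambiguous, independent of the face or tetrahedron producing it.

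My plan is to expand every $\Ell$ in $S^\pm$ using the appropriate formula among (a)--(h), obtaining a signed sum of $18$ Lagrangian terms distributed over the edges and face-diagonals of the cube, and then to add signed copies of the six face single-quad identities and of the two tetrahedron single-quad identities. Each diagonal term appearing in $S^+-S^-$ should cancel against a matching term supplied by the appropriate tetrahedron identity (and, where necessary, by a face identity used to switch $\Ell$ between its two equivalent ``halves'' — the formula (a)--(h) half and the one based on the opposite vertex of the face). The remainder will then be a sum of Lagrangians $L$, $\Lambda$, $\bar\Lambda$ on the twelve cube edges; for each cube edge, the two contributions arising from the two adjacent faces match in parameter and cancel in sign, thanks to the symmetries $\Lambda(X,Y;\cdot)=\Lambda(Y,X;\cdot)$, $\bar\Lambda(X,Y;\cdot)=\bar\Lambda(Y,X;\cdot)$, and the consistent black-to-white orientation built into the $L$ terms.

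The main obstacle will be the case analysis across the three biquadratics-pattern arrangements of Figure \ref{fig:5} and the nine cubes of Theorem \ref{th:class}: the colouring of vertices and hence the selection of formula (a)--(h) on each face differ between rhombic and trapezoidal faces, and one must pinpoint the correct signed combination of the eight single-quad identities — as well as the correct choice of ``half'' for each $\Ell$ — so that every face-diagonal contribution in $S^+-S^-$ is precisely absorbed by a term from a tetrahedron identity. Once this combination is identified in each case, the resulting cancellation on the cube edges is a formal matter, in direct analogy with the treatment of the all-type-Q case in \cite{BS1}.
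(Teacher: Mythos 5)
Your strategy is essentially the paper's: verify \eqref{flip inv} locally on each of the nine cubes of Theorem \ref{th:class} by combining the single-quad identities of Theorem \ref{th:single} for face and tetrahedron equations, with Lemma \ref{biquads} guaranteeing that each edge and diagonal carries a well-defined Lagrangian. The difference is in the bookkeeping, and the paper's choice is leaner than yours: it adds only \emph{four} identities --- those of the three faces adjacent to $x$ and of the single tetrahedron $\{x,x_{12},x_{13},x_{23}\}$ through $x$, with signs $+,-,+,+$ chosen so that every term based at $x$ cancels. The far-face contributions $\Ell(X_1,X_{12},X_{13};\alpha_2,\alpha_3)$ etc.\ then appear automatically, because each ``middle'' edge $x_ix_{ij}$ lies on one near face and one far face, so the terms $L(X_{ij},X_i;\cdot)$ produced by the near-face identities are exactly the edge terms of the far-face $\Ell$'s; the far-face identities, the far tetrahedron $\bar K(x_1,x_2,x_3,x_{123})=0$, and your ``switching between halves'' step are never needed, and no diagonal $x_ix_{123}$ ever enters. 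Your redundant combination of eight identities can be made to work, but it forces you to introduce and then re-cancel terms on the diagonals $x_ix_{123}$ and on the edges at $x_{123}$, which is precisely the extra case-by-case sign-hunting you flag as the main obstacle. Be aware that this deferred verification \emph{is} the proof: the paper carries it out explicitly for the three biquadratics patterns of Figure \ref{fig:5} (the remaining cubes following by color inversion and rotation), and until you exhibit the signed combination in each case your argument is a plan rather than a proof.
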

\begin{proof} We give the proof which generalizes the first proof of the corresponding result for equations from the ABS list given in \cite{BS1}. One adds the formulae from Theorem~\ref{th:single} for the three faces and one tetrahedron adjacent to the point $x$. We give details for the three elementary cubes illustrated on Figure~\ref{fig:5} and omit the details for the cubes obtained by the color inversion and rotations.

{\em Six-tuple from Figure~\ref{fig:5a}:} Add the four relations
\begin{multline*}
L\left(X,X_1;\alpha_{1}\right)+L\left(X_{12},X_{2};\alpha_{1}\right)-
L\left(X,X_{2};\alpha_{2}\right)-L\left(X_{12},X_{1};\alpha_{2}\right)\\
-\Lambda\left(X,X_{12};\alpha_{1}-\alpha_{2}\right)-
\bar{\Lambda}\left(X_{1},X_{2};\alpha_{1}-\alpha_{2}\right)=0,
\end{multline*}
\begin{multline*}
-L\left(X,X_1;\alpha_{1}\right)-L\left(X_{13},X_{3};\alpha_{1}\right)+
L\left(X,X_{3};\alpha_{3}\right)+L\left(X_{13},X_{1};\alpha_{3}\right)\\
+\Lambda\left(X,X_{13};\alpha_{1}-\alpha_{3}\right)+
\bar{\Lambda}\left(X_{1},X_{3};\alpha_{1}-\alpha_{3}\right)=0,
\end{multline*}
\begin{multline*}
L\left(X,X_2;\alpha_{2}\right)+L\left(X_{23},X_{3};\alpha_{2}\right)-
L\left(X,X_{3};\alpha_{3}\right)-L\left(X_{23},X_{2};\alpha_{3}\right)\\
-\Lambda\left(X,X_{23};\alpha_{2}-\alpha_{3}\right)-
\bar{\Lambda}\left(X_{2},X_{3};\alpha_{2}-\alpha_{3}\right)=0,
\end{multline*}
\begin{multline*}
\Lambda\left(X,X_{23};\alpha_{2}-\alpha_{3}\right)+
\Lambda\left(X_{12},X_{13};\alpha_{2}-\alpha_{3}\right)-
\Lambda\left(X,X_{13};\alpha_{1}-\alpha_{3}\right)-
\Lambda\left(X_{12},X_{23};\alpha_{1}-\alpha_{3}\right)\\
+\Lambda\left(X,X_{12};\alpha_{1}-\alpha_{2}\right)+
\Lambda\left(X_{13},X_{23};\alpha_{1}-\alpha_{2}\right)=0.
\end{multline*}
The result reads:
\begin{eqnarray*}
\lefteqn{L\left(X_{12},X_{1};\alpha_{2}\right)
+L\left(X_{23},X_{2};\alpha_{3}\right)
+L\left(X_{13},X_{3};\alpha_{1}\right)}\\
&& -L\left(X_{13},X_{1};\alpha_{3}\right)
-L\left(X_{12},X_{2};\alpha_{1}\right)
-L\left(X_{23},X_{3};\alpha_{2}\right)\\
&& -\Lambda\left(X_{12},X_{13};\alpha_{2}-\alpha_{3}\right)
-\Lambda\left(X_{12},X_{23};\alpha_{3}-\alpha_{1}\right)
-\Lambda\left(X_{13},X_{23};\alpha_{1}-\alpha_{2}\right)\\
&& +\bar{\Lambda}\left(X_{2},X_{3};\alpha_{2}-\alpha_{3}\right)
+\bar{\Lambda}\left(X_{1},X_{3};\alpha_{3}-\alpha_{1}\right)
+\bar{\Lambda}\left(X_{1},X_{2};\alpha_{1}-\alpha_{2}\right)=0.
\end{eqnarray*}
which is the flip invariance relation (\ref{flip inv}) in this case.

{\em Six-tuple from Figure~\ref{fig:5b}:} Add the four relations
\begin{multline*}
\Lambda\left(X,X_1;\alpha_{1}\right)+\Lambda\left(X_{2},X_{12};\alpha_{1}\right)-
\Lambda\left(X,X_{2};\alpha_{2}\right)-\Lambda\left(X_{1},X_{12};\alpha_{2}\right)\\
-\Lambda\left(X,X_{12};\alpha_{1}-\alpha_{2}\right)-
\Lambda\left(X_{1},X_{2};\alpha_{1}-\alpha_{2}\right)=0,
\end{multline*}
\begin{multline*}
-\Lambda\left(X,X_1;\alpha_{1}\right)-\bar\Lambda\left(X_{3},X_{13};\alpha_{1}\right)+
L\left(X,X_{3};\alpha_{3}\right)+L\left(X_{1},X_{13};\alpha_{3}\right)\\
+L\left(X,X_{13};\alpha_{1}-\alpha_{3}\right)+
L\left(X_{1},X_{3};\alpha_{1}-\alpha_{3}\right)=0,
\end{multline*}
\begin{multline*}
\Lambda\left(X,X_2;\alpha_{2}\right)+\bar\Lambda\left(X_{3},X_{23};\alpha_{2}\right)-
L\left(X,X_{3};\alpha_{3}\right)-L\left(X_{2},X_{23};\alpha_{3}\right)\\
-L\left(X,X_{23};\alpha_{2}-\alpha_{3}\right)-
L\left(X_{2},X_{3};\alpha_{2}-\alpha_{3}\right)=0,
\end{multline*}
\begin{multline*}
L\left(X,X_{23};\alpha_{2}-\alpha_{3}\right)+
L\left(X_{12},X_{13};\alpha_{2}-\alpha_{3}\right)-
L\left(X,X_{13};\alpha_{1}-\alpha_{3}\right)-
L\left(X_{12},X_{23};\alpha_{1}-\alpha_{3}\right)\\
+\Lambda\left(X,X_{12};\alpha_{1}-\alpha_{2}\right)+
\bar\Lambda\left(X_{13},X_{23};\alpha_{1}-\alpha_{2}\right)=0.
\end{multline*}
The result reads:
\begin{eqnarray*}
\lefteqn{\Lambda\left(X_{1},X_{12};\alpha_{2}\right)
+L\left(X_{2},X_{23};\alpha_{3}\right)
+\bar\Lambda\left(X_{3},X_{13};\alpha_{1}\right)}\\
&& -L\left(X_{1},X_{13};\alpha_{3}\right)
-\Lambda\left(X_{2},X_{12};\alpha_{1}\right)
-\bar\Lambda\left(X_{3},X_{23};\alpha_{2}\right)\\
&& -L\left(X_{12},X_{13};\alpha_{2}-\alpha_{3}\right)
+L\left(X_{12},X_{23};\alpha_{1}-\alpha_{3}\right)
-\bar\Lambda\left(X_{13},X_{23};\alpha_{1}-\alpha_{2}\right)\\
&& +L\left(X_{2},X_{3};\alpha_{2}-\alpha_{3}\right)
-L\left(X_{1},X_{3};\alpha_{1}-\alpha_{3}\right)
+\Lambda\left(X_{1},X_{2};\alpha_{1}-\alpha_{2}\right)=0.
\end{eqnarray*}
which is the flip invariance relation (\ref{flip inv}) in this case.

{\em Six-tuple from Figure~\ref{fig:5c}:} Add the four relations
\begin{multline*}
L\left(X,X_1;\alpha_{1}\right)+L\left(X_{12},X_{2};\alpha_{1}\right)-
L\left(X,X_{2};\alpha_{2}\right)-L\left(X_{12},X_{1};\alpha_{2}\right)\\
-\Lambda\left(X,X_{12};\alpha_{1}-\alpha_{2}\right)-
\bar\Lambda\left(X_{1},X_{2};\alpha_{1}-\alpha_{2}\right)=0,
\end{multline*}
\begin{multline*}
-L\left(X,X_1;\alpha_{1}\right)-L\left(X_{3},X_{13};\alpha_{1}\right)+
\Lambda\left(X,X_{3};\alpha_{3}\right)+
\bar\Lambda\left(X_{1},X_{13};\alpha_{3}\right)\\
+L\left(X,X_{13};\alpha_{1}-\alpha_{3}\right)+
L\left(X_{3},X_{1};\alpha_{1}-\alpha_{3}\right)=0,
\end{multline*}
\begin{multline*}
L\left(X,X_2;\alpha_{2}\right)+L\left(X_{3},X_{23};\alpha_{2}\right)-
\Lambda\left(X,X_{3};\alpha_{3}\right)-\bar\Lambda\left(X_{2},X_{23};\alpha_{3}\right)\\
-L\left(X,X_{23};\alpha_{2}-\alpha_{3}\right)-
L\left(X_{3},X_{2};\alpha_{2}-\alpha_{3}\right)=0,
\end{multline*}
\begin{multline*}
L\left(X,X_{23};\alpha_{2}-\alpha_{3}\right)+
L\left(X_{12},X_{13};\alpha_{2}-\alpha_{3}\right)-
L\left(X,X_{13};\alpha_{1}-\alpha_{3}\right)-
L\left(X_{12},X_{23};\alpha_{1}-\alpha_{3}\right)\\
+\Lambda\left(X,X_{12};\alpha_{1}-\alpha_{2}\right)+
\bar\Lambda\left(X_{13},X_{23};\alpha_{1}-\alpha_{2}\right)=0.
\end{multline*}
The result reads:
\begin{eqnarray*}
\lefteqn{L\left(X_{12},X_{1};\alpha_{2}\right)
+\Lambda\left(X_{2},X_{23};\alpha_{3}\right)
+L\left(X_{3},X_{13};\alpha_{1}\right)}\\
&& -\Lambda\left(X_{1},X_{13};\alpha_{3}\right)
-L\left(X_{12},X_{2};\alpha_{1}\right)
-L\left(X_{3},X_{23};\alpha_{2}\right)\\
&& -L\left(X_{12},X_{13};\alpha_{2}-\alpha_{3}\right)
+L\left(X_{12},X_{23};\alpha_{1}-\alpha_{3}\right)
-\bar\Lambda\left(X_{13},X_{23};\alpha_{1}-\alpha_{2}\right)\\
&& +L\left(X_{3},X_{2};\alpha_{2}-\alpha_{3}\right)
-L\left(X_{3},X_{1};\alpha_{1}-\alpha_{3}\right)
+\bar\Lambda\left(X_{1},X_{2};\alpha_{1}-\alpha_{2}\right)=0.
\end{eqnarray*}
which is the flip invariance relation (\ref{flip inv}) in this case.
\end{proof}

\section*{Acknowledgments}
The first author is supported by the Berlin Mathematical School, the second author is partially supported by the DFG Research Unit 565 ``Polyhedral Surfaces''.

\appendix
\section{List of Quad-Equations}
\subsection{Type Q} \label{TypeQ}
\begin{itemize}
\item[$Q_{1}^{\epsilon}$:] $Q=\alpha_{1}\left(x_{1}x_{2}+x_{3}x_{4}\right)-\alpha_{2}\left(x_{1}x_{4}+x_{2}x_{3}\right)-
    \left(\alpha_{1}-\alpha_{2}\right)\left(x_{1}x_{3}+x_{2}x_{4}\right)$\par
    $\qquad+\epsilon^{2}\alpha_{1}\alpha_{2}\left(\alpha_{1}-\alpha_{2}\right)$,\par
$x_{i}=X_{i}$,\par
$\kappa(\alpha_1,\alpha_2)=2\alpha_1\alpha_2(\alpha_1-\alpha_2)$,\par
$h^{1,2}\left(x_{1},x_{2};\alpha_{1}\right)=
\dfrac{1}{2}\left(\dfrac{\left(x_{1}-x_{2}\right)^{2}}{\alpha_{1}}-\epsilon^{2}\alpha_{1}\right)$,\par
$\phi\left(x_{1},x_{2};\alpha_{1}\right)=\begin{cases}
\dfrac{1}{\epsilon}\log\dfrac{X_{1}-X_{2}+\epsilon \alpha_{1}}{X_{1}-X_{2}-\epsilon\alpha_{1}}, & \text{if } \epsilon\neq 0, \\
\dfrac{2\alpha_1}{X_{1}-X_{2}}, & \text{if } \epsilon=0 ;
\end{cases}$

\item[$Q_{2}^{\epsilon}$:] $Q=\alpha_{1}\left(x_{1}x_{2}+x_{3}x_{4}\right)-
    \alpha_{2}\left(x_{1}x_{4}+x_{2}x_{3}\right)-\left(\alpha_{1}-
    \alpha_{2}\right)\left(x_{1}x_{3}+x_{2}x_{4}\right)$\par $\qquad+\alpha_{1}\alpha_{2}\left(\alpha_{1}-\alpha_{2}\right)+
    \epsilon\alpha_{1}\alpha_{2}\left(\alpha_{1}-\alpha_{2}\right)\left(x_{1}+x_{2}+x_{3}+x_{4}\right)$\par $\qquad-\epsilon^{2}\alpha_{1}\alpha_{2}\left(\alpha_{1}-\alpha_{2}\right)
    \left(\alpha_{1}^{2}-\alpha_{1}\alpha_{2}+\alpha_{2}^{2}\right)$,\par
$x_{i}=\begin{cases}X_i(1+\epsilon X_{i}), &\text{if }\epsilon\neq 0,\\X_{i},&\text{if }\epsilon=0,\end{cases}$\par
$\kappa(\alpha_1,\alpha_2)=2\alpha_1\alpha_2(\alpha_1-\alpha_2)$,\par
$h^{1,2}\left(x_{1},x_{2};\alpha_{1}\right)=\dfrac{1}{2\alpha_{1}}
\left(\left(x_{1}-x_{2}\right)^{2}-\alpha_{1}^{2}-2\epsilon\alpha_{1}^{2}\left(x_{1}+x_{2}\right)
+\epsilon^{2}\alpha_{1}^{4}\right)$,\par
$\phi\left(x_{1},x_{2};\alpha_{1}\right)=\begin{cases}
\log\dfrac{\left(X_{1}-X_{2}+\alpha_{1}\right)
\left(1+\epsilon(X_{1}+X_{2}+\alpha_{1})\right)}{\left(X_{1}-X_{2}-\alpha_{1}\right)
\left(1+\epsilon(X_{1}+X_{2}-\alpha_{1})\right)}, & \text{if } \epsilon\neq 0,\\
\log\dfrac{X_{1}-X_{2}+\alpha_{1}}{X_{1}-X_{2}-\alpha_{1}}, & \text{if } \epsilon=0;
\end{cases}$

\item[$Q_{3}^{\epsilon}$:] $Q=\sinh(2\alpha_{1})\left(x_{1}x_{2}+x_{3}x_{4}\right)-
    \sinh(2\alpha_{2})\left(x_{1}x_{4}+x_{2}x_{3}\right)$\par $\qquad-\sinh(2(\alpha_{1}-\alpha_{2}))\left(x_{1}x_{3}+x_{2}x_{4}\right)$\par $\qquad-4\delta^{2}\epsilon^{2}\sinh(2\alpha_{1})\sinh(2\alpha_{2})
    \sinh(2(\alpha_{1}-\alpha_{2}))$,\par
$x_{i}=\begin{cases}2\delta\epsilon\cosh\left(2X_{i}\right), & \text{if }
\epsilon\neq 0, \\ \exp(2X_{i}), & \text{if }\epsilon=0,\end{cases}$\par
$\kappa(\alpha_1,\alpha_2)=\sinh(2\alpha_1)\sinh(2\alpha_2)\sinh(2(\alpha_1-\alpha_2))$, \par
$h^{1,2}\left(x_{1},x_{2};\alpha_{1}\right)=
\dfrac{\left(e^{\alpha_{1}}x_{1}-e^{-\alpha_{1}}x_{2}\right)
\left(e^{-\alpha_{1}}x_{1}-e^{2\alpha_{1}}x_{2}\right)}
{\sinh(2\alpha_{1})}+4\delta^{2}\epsilon^{2}\sinh(2\alpha_{1})$,\par
$\phi\left(x_{1},x_{2};\alpha_{1}\right)=
\begin{cases}
\log\dfrac{\sinh\left(X_{1}+X_{2}+\alpha_{1}\right)
\sinh\left(X_{1}-X_{2}+\alpha_{1}\right)}{\sinh\left(X_{1}+X_{2}-\alpha_{1}\right)
\sinh\left(X_{1}-X_{2}-\alpha_{1}\right)}, & \text{if } \epsilon\neq0,\\
\log\dfrac{\sinh\left(X_{1}-X_{2}+\alpha_{1}\right)}{\sinh\left(X_{1}-X_{2}-\alpha_{1}\right)}+2\alpha_1, & \text{if } \epsilon=0.
\end{cases}$
\end{itemize}

\subsection{Type \texorpdfstring{\Hvier}{H4}} \label{TypeH}
\begin{itemize}
\item[$H_{1}^{\epsilon}$:] $Q=\left(x_{1}-x_{3}\right)\left(x_{2}-x_{4}\right)+
    \left(\alpha_{2}-\alpha_{1}\right)\left(1-\epsilon^{2} x_{2}x_{4}\right)$,\par
$x_{i}=X_{i}$,\par
$\kappa(\alpha_1,\alpha_2)=2(\alpha_1-\alpha_2)$,\par
$h^{1,2}\left(x_{1},x_{2};\alpha_{1}\right)=\dfrac{1-\epsilon^{2} x_{2}^{2}}{2}$,\par
$h^{1,3}\left(x_{1},x_{3};\alpha_{1}-\alpha_{2}\right)=
-\dfrac{1}{2}\left(\dfrac{\left(x_{1}-x_{3}\right)^{2}}{\alpha_{1}-\alpha_{2}}-
\epsilon^{2}\left(\alpha_{1}-\alpha_{2}\right)\right)$,\par
$h^{2,4}\left(x_{2},x_{4};\alpha_{1}-\alpha_{2}\right)=
-\dfrac{1}{2}\left(\dfrac{\left(x_{2}-x_{4}\right)^{2}}{\alpha_{1}-\alpha_{2}}\right)$,\par
$\psi\left(x_{1},x_{2};\alpha_{1}\right)=
\begin{cases} \dfrac{1}{\epsilon}\log\dfrac{1+\epsilon X_{2}}{1-\epsilon X_{2}}, &
\text{if } \epsilon\neq 0, \\ 2X_{2}, & \text{if } \epsilon\neq 0, \end{cases}$ \par
$\bar{\psi}\left(x_{2},x_{1};\alpha_{1}\right)=\begin{cases}
2\dfrac{X_{1}-\epsilon^{2}\alpha_{1}X_{2}}{1-\epsilon^{2}X_{2}^{2}}, & \text{if } \epsilon\neq 0, \\
2X_{1}, & \text{if } \epsilon=0,\end{cases}$ \par
$\phi\left(x_{1},x_{3};\alpha_{1}-\alpha_{2}\right)=\begin{cases}
\dfrac{1}{\epsilon}\log\dfrac{X_{1}-X_{3}+
\epsilon\left(\alpha_{1}-\alpha_{2}\right)}{X_{1}-X_{3}-\epsilon\left(\alpha_{1}-\alpha_{2}\right)},
& \text{if } \epsilon\neq 0,\\
2\dfrac{\alpha_{1}-\alpha_{2}}{X_{1}-X_{3}}, & \text{if } \epsilon=0,\end{cases}$\par
$\bar{\phi}\left(x_{2},x_{4};\alpha_{1}-\alpha_{2}\right)=
2\dfrac{\alpha_{1}-\alpha_{2}}{X_{2}-X_{4}};$

\item[$H_{2}^{\epsilon}$:] $Q=\left(x_{1}-x_{3}\right)\left(x_{2}-x_{4}\right)+
    \left(\alpha_{2}-\alpha_{1}\right)\left(x_{1}+x_{2}+x_{3}+x_{4}\right)
    +\alpha_{2}^{2}-\alpha_{1}^{2}$\par
$\qquad-\dfrac{\epsilon}{2}\left(\alpha_{2}-\alpha_{1}\right)
\left(2x_{2}+\alpha_{1}+\alpha_{2}\right)\left(2x_{4}+\alpha_{1}+\alpha_{2}\right)-
\dfrac{\epsilon}{2}\left(\alpha_{2}-\alpha_{1}\right)^{3}$,\par
$x_{i}=\begin{cases}X_i(1+\epsilon X_{i}), &\text{if }\epsilon\neq 0\\
X_{i}&\text{if }\epsilon=0\end{cases}\quad (i=1,3),$\qquad $x_{i}=X_{i}\quad (i=2,4)$, \par
$\kappa(\alpha_1,\alpha_2)=2(\alpha_1-\alpha_2)$,\par
$h^{1,2}\left(x_{1},x_{2};\alpha_{1}\right)=
x_{1}+x_{2}+\alpha_{1}-\epsilon\left(x_{2}+\alpha_{1}\right)^{2}$,\par
$h^{1,3}\left(x_{1},x_{3};\alpha_{1}-\alpha_{2}\right)=
-\dfrac{1}{2\left(\alpha_{1}-\alpha_{2}\right)}$\par
$\qquad\cdot\left(\left(x_{1}-x_{3}\right)^{2}-\left(\alpha_{1}-
\alpha_{2}\right)^{2}-2\epsilon\left(\alpha_{1}-\alpha_{2}\right)^{2}
\left(x_{1}+x_{3}\right)+\epsilon^{2}\left(\alpha_{1}-\alpha_{2}\right)^{4}\right)$,\par
$h^{2,4}\left(x_{2},x_{4};\alpha_{1}-\alpha_{2}\right)=
-\dfrac{1}{2\left(\alpha_{1}-\alpha_{2}\right)}
\left(\left(x_{2}-x_{4}\right)^{2}-\left(\alpha_{1}-\alpha_{2}\right)^{2}\right)$,\par
$\psi\left(x_{1},x_{2};\alpha_{1}\right)=\begin{cases}
\log\dfrac{X_{1}+X_{2}+\alpha_{1}}
{1+\epsilon(X_{1}-X_{2}-\alpha_{1})}, & \text{if } \epsilon\neq 0,\\
\log\left(X_{1}+X_{2}+\alpha_{1}\right), & \text{if } \epsilon=0,
\end{cases}$,\par
$\bar{\psi}\left(x_{2},x_{1};\alpha_{1}\right)=\begin{cases}
\log(\left(X_{1}+X_{2}+\alpha_{1}\right)
\left(1+\epsilon(X_{1}-X_{2}-\alpha_{1})\right)), & \text{if } \epsilon\neq 0,\\
\log\left(X_{1}+X_{2}+\alpha_{1}\right), & \text{if } \epsilon=0,
\end{cases}$\par
$\phi\left(x_{1},x_{3};\alpha_{1}-\alpha_{2}\right)=\begin{cases}
\log\dfrac{\left(X_{1}-X_{3}+\alpha_{1}-\alpha_{2}\right)
\left(1+\epsilon(X_{1}+X_{3}+\alpha_{1}-\alpha_{2})\right)}
{\left(X_{1}-X_{3}-\alpha_{1}+\alpha_{2}\right)
\left(1+\epsilon(X_{1}+X_{3}-\alpha_{1}+\alpha_{2})\right)}, & \text{if }\epsilon\neq 0,\\
\log\dfrac{X_{1}-X_{3}+\alpha_{1}-\alpha_{2}}{X_{1}-X_{3}-\alpha_{1}+\alpha_{2}}, & \text{if } \epsilon=0,\end{cases}$\par
$\bar{\phi}\left(x_{2},x_{4};\alpha_{1}-\alpha_{2}\right)=
\log\dfrac{X_{2}-X_{4}+\alpha_{1}-\alpha_{2}}{X_{2}-X_{4}-\alpha_{1}+\alpha_{2}};$\par

\item[$H_{3}^{\epsilon}$:] $Q=e^{2\alpha_{1}}\left(x_{1}x_{2}+x_{3}x_{4}\right)-
    e^{2\alpha_{2}}\left(x_{1}x_{4}+x_{2}x_{3}\right)+
    \left(e^{4\alpha_{1}}-e^{4\alpha_{2}}\right)\left(\delta^{2}+\dfrac{\epsilon^{2} x_{2}x_{4}}{e^{2\left(\alpha_{1}+\alpha_{2}\right)}}\right)$,\par
$x_{i}=\begin{cases}2\delta\epsilon\cosh\left(2X_{i}\right), & \text{if }\epsilon\neq 0\\
\exp(2X_{i}), &\text{if }\epsilon=0 \end{cases}\quad (i=1,3),$ $\qquad x_{i}=\exp(2X_{i}),\quad (i=2,4)$,\par
$\kappa(\alpha_1,\alpha_2)=\left(e^{4\alpha_1}-e^{4\alpha_2}\right)/2$,\par
$h^{1,2}\left(x_{1},x_{2};\alpha_{1}\right)=
-2\left(x_{1}x_{2}+\delta^{2}e^{2\alpha_{1}}+\dfrac{\epsilon^{2}x_{2}^{2}}{e^{2\alpha_{1}}}\right)$,\par
$h^{1,3}\left(x_{1},x_{3};\alpha_{1}-\alpha_{2}\right)$\par
$\qquad=-\dfrac{\left(e^{\alpha_1-\alpha_2}x_{1}-
e^{\alpha_2-\alpha_1}x_{3}\right)\left(e^{\alpha_2-\alpha_1}x_{1}-
e^{\alpha_1-\alpha_2}x_{3}\right)}{\sinh(2(\alpha_1-\alpha_2))}-
4\delta^{2}\epsilon^{2}\sinh(2(\alpha_1-\alpha_2))$,\par
$h^{2,4}\left(x_{2},x_{4};\alpha_{1}-\alpha_{2}\right)=
-\dfrac{\left(e^{\alpha_1-\alpha_2}x_{2}-e^{\alpha_2-\alpha_1}x_{4}\right)
\left(e^{\alpha_2-\alpha_1}x_{2}-e^{\alpha_1-\alpha_2}x_{4}\right)}{\sinh(2(\alpha_1-\alpha_2))}$,\par
$\psi\left(x_{1},x_{2};\alpha_{1}\right)=\begin{cases}
\log\dfrac{\epsilon e^{2X_2-2X_{1}}+\delta e^{2\alpha_{1}}}
{\epsilon e^{2X_1+2X_2}+\delta e^{2\alpha_{1}}}, & \text{if } \epsilon\neq 0,\\ \\
\log\dfrac{e^{2\alpha_{1}}}{e^{2X_{1}+2X_{2}}+\delta^{2}e^{2\alpha_{1}}}, &
\text{if } \epsilon=0,\end{cases}$\par
$\bar{\psi}\left(x_{2},x_{1};\alpha_{1}\right)=\begin{cases}
\log\dfrac{e^{4\alpha_{1}}}{\left(\epsilon e^{2X_2-2X_1}+\delta e^{2\alpha_{1}}\right)\left(\epsilon e^{2X_1+2X_2}+\delta e^{2\alpha_{1}}\right)}, &
\text{if } \epsilon\neq 0,\\ \\
\log\dfrac{e^{2\alpha_{1}}}{e^{2X_1+2X_2}+\delta^{2}e^{2\alpha_{1}}}, &
\text{if } \epsilon=0,\end{cases}$\par
$\phi\left(x_{1},x_{3};\alpha_{1}-\alpha_{2}\right)=\begin{cases}
\log\dfrac{\sinh\left(X_{1}+X_{3}+\alpha_{1}-\alpha_{2}\right)
\sinh\left(X_{1}-X_{3}+\alpha_{1}-\alpha_{2}\right)}
{\sinh\left(X_{1}+X_{3}-\alpha_{1}+\alpha_{2}\right)
\sinh\left(X_{1}-X_{3}-\alpha_{1}+\alpha_{2}\right)}, &
\text{if } \epsilon\neq 0,\\ \\
\log\dfrac{\sinh(X_1-X_3+\alpha_1-\alpha_2)}{\sinh(X_1-X_3-\alpha_1+\alpha_2)}+2(\alpha_1-\alpha_2), & \text{if } \epsilon=0,\end{cases}$\par
$\bar{\phi}\left(x_{2},x_{4};\alpha_{1}-\alpha_{2}\right)=
\log\dfrac{\sinh(X_2-X_4+\alpha_1-\alpha_2)}{\sinh(X_2-X_4-\alpha_1+\alpha_2)}
+2(\alpha_1-\alpha_2)$.
\end{itemize}
\begin{rem}
The transition from $Q_3^\epsilon$ to $Q_3^0$ and from $H_3^\epsilon$ to $H_3^0$ is performed by first shifting $X_i\to X_i-\frac{1}{2}\log(\delta\epsilon)$ (for $i=1,2,3,4$ for equation $Q_3^\epsilon$ and for $i=1,3$ for equation $H_3^\epsilon$), and then sending $\epsilon\to 0$.
\end{rem}


\providecommand{\bysame}{\leavevmode\hbox to3em{\hrulefill}\thinspace}
\providecommand{\MR}{\relax\ifhmode\unskip\space\fi MR }
\providecommand{\MRhref}[2]{%
  \href{http://www.ams.org/mathscinet-getitem?mr=#1}{#2}
}
\providecommand{\href}[2]{#2}

\end{document}